\newtheorem{defi}{Definition}
\newtheorem{theo}{Theorem}
\newtheorem{prop}{Proposition}
\newtheorem{lemm}{Lemma}
\newtheorem{coro}{Corollary}
\newenvironment{definition}{\begin{defi} \rm }{\end{defi}}
\newenvironment{theorem}{\begin{theo} \rm }{\end{theo}}
\newenvironment{proposition}{\begin{prop} \rm }{\end{prop}}
\newenvironment{lemma}{\begin{lemm} \rm }{\end{lemm}}
\newenvironment{proof}{\begin{trivlist} \item[\hspace{\labelsep}\bf Proof:]}{\hfill$\Box$\end{trivlist}}
\newcommand{\AP}{\ensuremath{\mathsf{AP}}}
\newcommand{\Lit}{\ensuremath{\mathsf{Lit}}}
\renewcommand{\omit}[1]{}
\newcommand{\sem}[1]{\ensuremath{[\![ #1 ]\!]}}
\def\rmust{R^{+}}
\def\mixsim{\leq_{mix}}
\def\eqdef{\triangleq}
\def\iff{\Leftrightarrow}
\def\iffdef{\stackrel{def}\Leftrightarrow}
\def\implies{\Rightarrow}
\def\nat{\mathbb{N}}
\def\false{\bot}
\def\true{\top}
\def\vars{\mc{V}}
\newcommand{\diamop}{\lozenge}
\newcommand{\boxop}{\Box}
\newcommand{\mc}[1]{\ensuremath{\mathcal{#1}}}
\newcommand{\ie}{i.e.\xspace}
\newcommand{\eg}{e.g.\xspace}
\newcommand{\ap}{{:}}
\newcommand{\rel}[1]{\ensuremath{\mathrel{#1}}}
\newcommand{\KMTS}{KMTS\xspace}
\newcommand{\MTSs}{KMTSs\xspace}
\newcommand{\KMTSs}{KMTSs\xspace}
\newcommand{\GTS}{GTS\xspace}
\newcommand{\GTSs}{GTSs\xspace}
\newcommand{\concret}[3]{\ensuremath{\mathbb{C}_{#3}[\langle #1,#2 \rangle]}}
\newcommand{\concretfix}[4]{\ensuremath{\mathbb{C}_{#4}[\langle #1,#2 \rangle,#3]}}
\newcommand{\spair}[2]{\ensuremath{\langle #1, #2 \rangle}}
\def\calm{{\cal M}}
\def\calf{{\cal F}}
\def\calc{{\cal C}}
\def\mKS{\textbf{KS}}
\def\mMTS{\textbf{KMTS}}
\def\mKMTS{\textbf{KMTS}}
\def\mGTS{\textbf{GTS}}
\def\lmu{{\cal L}_{\mu}}
\def\absmodels{\ensuremath{\models^{\alpha}}}
\def\metaref{\preceq}
\newcommand{\states}[1]{\textbf{states}(#1)}
\newcommand{\tuple}[1]{\langle #1 \rangle}
\begin{document}

\title{Expressiveness and Completeness in Abstraction}
\author{Maciej Gazda and Tim A.C. Willemse
\institute{Department of Computer Science, Eindhoven University of Technology (TU/e),
\\ {P.O.~Box~513}, NL-5600~MB~~Eindhoven, The Netherlands
}
}

\def\titlerunning{Expressiveness and Completeness in Abstraction}
\def\authorrunning{Maciej Gazda and Tim A.C. Willemse}
\maketitle

\begin{abstract}
We study two notions of expressiveness, which have appeared in
abstraction theory for model checking, and find them incomparable in
general. In particular, we show that according to the most widely
used notion, the class of Kripke Modal Transition Systems is strictly
less expressive than the class of Generalised Kripke Modal Transition
Systems (a generalised variant of Kripke Modal Transition Systems
equipped with hypertransitions). Furthermore, we investigate the ability
of an abstraction framework to prove a formula with a finite abstract
model, a property known as \emph{completeness}. We address the issue of
completeness from a general perspective: the way it depends on certain
abstraction parameters, as well as its relationship with expressiveness.

\end{abstract}

\section{Introduction}

Model checking \cite{Clarke:2000:MC:332656} is one of the key technologies
for formal software verification. Given a model of a program or a
process and a specification of the required behaviour in the form of a
logical formula, a model checker can automatically verify whether the
model satisfies the specification. A model checker typically explores
the entire state space of a program. Such a state space is enormous in
most practical applications.

Abstract interpretation \cite{CC:77,ClarkeGL92} is among the most
important techniques designed to handle the state space explosion problem,
making many instances of the model checking problem tractable. It works
by approximating the artefacts of the original model, the so-called
\emph{concrete model}, by simpler \emph{abstract objects}. A model
transformed in this way has a smaller abstract state space.  The loss of
detail in this model may allow a model checker to successfully verify the
property, but it can also give rise to an inconclusive answer.  The cause
of the inconclusive answer may be resolved by successive refinements of
the abstraction \cite{CGJLV:03}, leading to finer-grained abstract models.

Assuming that, as most works on abstraction do, concrete models are
modelled by \emph{Kripke Structures}, we investigate two key properties
of abstraction formalisms for Kripke Structures. First, we study the
\emph{expressiveness} of the formalism. This gives us the information
about the classes of concrete structures that can be described by abstract
models. Second, we study the \emph{completeness} of the formalism. In
abstraction, completeness is the degree to which properties
of a concrete model can be proved using a \emph{finite} abstraction.

A systematic survey of the literature reveals that there is an
abundance of different abstraction formalisms for Kripke Structures.
Kripke Structures equipped with the usual simulation relation themselves
form one of the first studied abstraction formalisms, but their power
is rather limited.  Below, we list the most important families of abstraction
formalisms:
\begin{enumerate}
 \item \emph{Modal Transition Systems} (MTSs) \cite{LT:88} with \emph{may}
 and \emph{must} transitions and a built-in consistency requirement,
 and related formalisms, see \eg~\cite{GJ:03}, such as \emph{Kripke
 Modal Transition Systems} (\MTSs) \cite{HJS:01}.

 \item \emph{Mixed Transition Systems} (MixTSs) \cite{DGG:97}, a modelling
 formalism similar to \MTSs, but with the added capability of expressing
 inconsistent specifications.

 \item \emph{Generalised Kripke Modal Transition Systems} (\GTSs)
 \cite{SG:04} with \emph{must hypertransitions}; similar structures were
 already introduced by Larsen and Xinxin in~\cite{LX:90} under the name
 of Disjunctive Transition Systems \cite{LX:90}, although there, these
 structures served a different purpose.

 \item Tree Automata (TA) \cite{DN:05}, the most expressive and complete
 of all of the listed formalisms.

\end{enumerate}

In this paper, we mostly restrict ourselves to GTSs and KMTSs. This is
because in practice, GTSs and KMTSs are the key abstraction formalisms
used. Tree Automata, while being the ``most complete'' among the
abstraction formalisms, are mainly of theoretical importance due
to the complexity of computing an abstraction using this formalism.

Expressiveness of abstraction formalisms has been studied before.
In fact, Wei \emph{et al.}~\cite{WGC:11} proved that the
formalisms from the KMTS family with may and must transitions have the same
expressiveness as GTSs with must hypertransitions; they claim to:
\begin{quote}
``...complete the picture by showing the expressive equivalence \emph{between}
these families.''
\end{quote} 
At first glance, this seems rather odd: the GTS abstraction formalism
is more liberal than any member of the KMTS family of abstraction
formalisms. While the arguments in~\cite{WGC:11} are sound, a closer
inspection of their results reveals that their notion of expressiveness
can be regarded as non-standard. It is therefore not immediately clear
whether their results are comparable to the expressiveness results
reported by, \eg\ Godefroid and Jagadeesan in~\cite{GJ:03}. 

We show that the notion of expressiveness \emph{does} make a difference:
using the notion used by \eg\ Godefroid and Jagadeesan we conclude that
the GTS abstraction formalism is \emph{strictly more expressive} than
members of the KMTS family of abstraction formalisms. The expressiveness
results claimed by Wei \emph{et al} are therefore likely to become a
source of confusion.  We henceforth refer to the notion of expressiveness
used by Wei \emph{et al} as \emph{contextual expressiveness}.

The aforementioned paper \cite{WGC:11} suggests the expressiveness and completeness of an
abstraction formalism are closely related, given the following quote:
\begin{quote}
``The work of Godefroid and Jagadeesan, and Gurfinkel and Chechik showed
that the models in the KMTS family have the same expressive power and
are equally precise for SIS. Dams and Namjoshi  showed that the three
families considered in this paper are subsumed by tree automata. We
completed the picture by proving that the three families are equivalent as
well. Specifically, we showed that KMTSs, MixTSs and GKMTSs are relatively
complete (in the sense of [Dams and Namjoshi]) with one another.''
\end{quote}
Since Dams and Namjoshi's paper~\cite{DN:04} to which they refer only studies
completeness, we can only conclude that Wei \emph{et al} consider completeness and expressiveness
as equivalent notions. Since they are defined in
different ways, it is therefore interesting to know the exact relationship
between the two, if one exists.

We have been able to establish only a weak link between expressiveness and
completeness, namely, only when assuming a \emph{thorough semantics} for logical
formulae, more expressive abstraction formalisms are more complete. Since
in~\cite{WGC:11}, thorough semantics is not used, our findings are not
in support of the claims in the above quote. 

Finally, we investigate the notion of completeness itself in more detail.
For instance, it is known that GTSs are complete for the fragment of
least-fixpoint free $\mu$-calculus formulae~\cite{AGJ:04}. However,
it is not known whether those are the only formulae for which GTSs are
complete. Our investigations reveal that the answer to this open question
is ambiguous and depends on the setting that is used.

\paragraph{Related work.}

Expressiveness of modelling formalisms for abstraction has been
first studied by Godefroid and Jagadeesan in \cite{GJ:03}. There,
it has been proved that Partial KSs (Kripke structures with possibly
unknown state labels), MTSs (without state labels) and KMTSs are
equally expressive. The proof consists of defining 3 translations
that preserve a variant of mixed simulation called a `completeness
preorder'. In~\cite{GC:05}, Gurfinkel and Chechik have shown that
Partial classical Kripke Structures (with each atomic proposition either
always ``true'' or ``false'', or always ``maybe'' ) are expressively
equivalent to the above formalisms. Subsequently, Wei, Gurfinkel and
Chechik \cite{WGC:11} have studied expressiveness in the context of a
fixed abstraction; we come back to this notion in more detail in
Section~\ref{sec:expressiveness}.

Expressiveness of various modelling formalisms in the context of refinement
has been studied in \cite{FE:09}. In contrast to our work, the authors
consider only deterministic structures as proper concretisations of
a model.

Dams and Namjoshi were the first to explicitly address the question
whether there are abstraction frameworks that are complete for the entire
$\mu$-calculus. They answer this question in~\cite{DN:04} by introducing
abstraction based on \emph{focused transition systems}. In their follow-up
work~\cite{DN:05}, they show that these focused transition systems are
in fact variants of $\mu$-automata, enabling a very brief and elegant
argument for completeness of their framework.

The \GTS/DMTS framework has received a considerable interest from the
abstraction community.  Shoham and Grumberg studied the precision of the
framework in~\cite{SG:08}; Fecher and Shoham, in~\cite{FS:11} used the
framework for a more algorithmic approach to abstraction, by performing
abstraction in a lazy fashion using a variation on parity games.

\paragraph{Outline.} In Section~\ref{sec:preliminaries}, we introduce
the abstraction formalisms and the basic mathematical machinery
needed to understand the remainder of the paper. We investigate the
expressiveness of the GTS and KMTS abstraction formalisms 
in Section~\ref{sec:expressiveness}. We then proceed to study completeness
in Section~\ref{sec:completeness}; there, we provide a formal framework
that allows to compare completeness of different formalisms, we study the
relation between completeness and expressiveness and we more accurately
characterise the set of formulae for which GTSs are
complete. We wrap up with concluding remarks and issues for future work
in Section~\ref{sec:conclusions}.

\def\atprop{\ensuremath{\textsf{AP}}}
\def\lit{\ensuremath{\textsf{Lit}}}

\section{Preliminaries}
\label{sec:preliminaries}

The first basic ingredient of an abstraction theory is the class of
concrete structures ${\cal C}$, representing the objects (programs,
program models) that we wish to analyse. Throughout this paper, 
we restrict ourselves to the
setting where ${\cal C}$ consists of all Kripke Structures (KSs), or possibly some subclasses of KSs (\eg the set of finitely branching Kripke Structures). 

Let $\atprop$ denote an arbitrary, fixed set of \emph{atomic
propositions}, used to specify properties of states; propositions and
their negations constitute the set of \emph{literals} $\lit = \atprop
\cup \{\neg p \,\mid\, p \in \atprop\}$.  Below, we recall the definition
of Kripke Structures.

\begin{definition} 
 A Kripke Structure is a tuple $\langle S, S^0, R, L \rangle$ where:
 \begin{itemize}
\item $S$ is a set of states, 
\item $S^0 \subseteq S$ is a set of initial states
\item $R \subseteq S \times S$ is the transition relation;

\item $L {:} S \to 2^\Lit$ is a labelling function such that $L(s)$
contains exactly one of $p$ and $\neg p$ for all $p \in \AP$. 
\end{itemize}
The class of all Kripke Structures is denoted $\mKS$.
\end{definition}

The concrete structures are described using abstract models; as a
convention, we use symbols ${\cal M}_{(i)}$ to denote classes of abstract
models and $M_{(i)}$ for instances of models. We assume that models consist
of states with a distinguished set of initial states, and additional
structural components such as transitions and labels. Formally, models are
tuples of the form $M = (S,S^0,\Sigma)$, where $\Sigma$ represents the
aforementioned structural artefacts. For a given model $M$, $\states{M}$
is the set of states underlying $M$. The notation $\tuple{M,s}$ will be
used to represent the state $s$ of a model $M$. We also make the general
assumption that abstract models considered in this paper are finite.\\

Properties of concrete and abstract models are expressed using a certain logic $L$. In case of concrete structures, we will use $\mu$-calculus ($\lmu$) with its standard semantics. The same logic is
used for our abstract models; however, for such models, semantics of $\lmu$,
given by the definition of the satisfaction relation by $\models^{\alpha}$,
may vary (\eg\ inductive or thorough semantics). Below, we first present
the syntax of $\lmu$ and its semantics for Kripke Structures.

\begin{definition} A $\mu$-calculus formula (in positive form) is a formula
generated by the following grammar:
\[
\varphi,\psi ::=  \true \mid \false \mid l \mid X \mid 
\varphi \wedge \psi \mid \varphi \vee \psi \mid \Box \varphi \mid \lozenge \varphi \mid
\nu X.~\varphi \mid \mu X.~\varphi
\]
where $l \in \Lit$ and $X \in \mc{V}$ for a set of propositional variables
$\mc{V}$. The $\mu$ and $\nu$ symbols denote the least and greatest fixpoint respectively. The semantics of a formula $\varphi$ is an inductively
defined  function $\sem{\_}$, in the context of a Kripke Structure  $M = \langle S, S_0, R, L \rangle$ and an environment
$\eta \ap \mc{V} \to 2^S$ assigning sets of states to the propositional
variables:
\[
\begin{array}{lclp{1cm}lcl}
\sem{\true}{^\eta} &=& S & & \sem{\false}{^\eta} &=& \emptyset\\

\sem{l}{^\eta} &=& \{s \in S ~|~ l \in L(s)\}  & & \sem{X}{^\eta} &=& \eta(X) \\

\sem{\varphi \wedge \psi}{^\eta} &=& \sem{\varphi}{^\eta} \cap \sem{\psi}{^\eta} & & \sem{\varphi \vee \psi}{^\eta} &=& \sem{\varphi}{^\eta} \cup \sem{\psi}{^\eta} \\ 

\sem{\lozenge \varphi}{^\eta} &=&  \tilde\lozenge \sem{\varphi}{^\eta} & & \sem{\Box \varphi}{^\eta} &=&  \tilde\Box \sem{\varphi}{^\eta}  \\

\sem{\mu X. \varphi}{^\eta} &=&  \mu U. \sem{\varphi}{^{\eta[X := U]}} & & \sem{\nu X. \varphi}{^\eta} &=&  \nu U. \sem{\varphi}{^{\eta[X := U]}} \\
\end{array}
\]
Here, we used the following two abbreviations: 
\[\begin{array}{lcl}
\tilde\lozenge U = \{s \in S ~|~ \exists q \in S.~ s \rel{R} q \wedge q \in U\} &
\qquad\text{and}\qquad
&
\tilde\Box U = \{s\in S ~|~ \forall q \in S.~ s \rel{R} q \Rightarrow
q \in U \}
\end{array}
\]
In case the formula $\varphi$ is \emph{closed} (\ie, does not contain free
propositional variables: variables that are not bound by a surrounding
fixpoint binding the variable), its semantics is independent of the
environment $\eta$, and we drop $\eta$ from the semantic brackets.
We say that 
a closed formula $\varphi$ is \emph{true} in a state $s \in S$, denoted
$\tuple{M,s} \models \varphi$ if $s \in \sem{\varphi}$; $\varphi$ is
\emph{false} in $s$, denoted $\tuple{M,s} \not\models \varphi$ if $s
\notin \sem{\varphi}$.

\end{definition}
The $\mu$-calculus is a highly expressive logic, subsuming
temporal logics such as LTL, CTL and CTL$^*$. It is capable of
expressing safety properties, liveness properties and (complex)
fairness properties. Typically, least fixpoint subformulae express
\emph{eventualities}, whereas greatest fixpoint subformulae express
\emph{invariance}. For instance, the formula $\nu X. (\Box X \wedge l)$
expresses that $l$ holds invariantly on all computation paths, whereas
$\mu X. ((\Box X \wedge \lozenge \top) \vee l)$ expresses that property
$l$ holds eventually on all computation paths. By mixing least and
greatest fixpoints one can construct (computationally and intuitively)
more complex formulae such as $\nu X. \mu Y. ((\lozenge X \wedge l)
\vee \lozenge Y)$, which expresses that there is a computation path on
which $l$ holds infinitely often.\\

An essential component of an abstraction formalism
is the notion of \emph{description}, \emph{approximation} or
\emph{refinement}, which provides a link between concrete and abstract
\emph{structures}. Typically, there is a ``meta-relation'' between classes
of structures (here denoted with $\metaref$). Such a meta-relation is
typically a relation on states or languages; for instance, \emph{mixed
simulation} or \emph{language containment}. 

As already mentioned, the final ingredient of an abstraction formalism is the semantics of the
logic on abstract models. The abstract $\lmu$ semantics should
include the definition of both \emph{satisfaction} (denoted with
$\models^{\alpha}$) and \emph{refutation} $\not \models^{\alpha}$
predicates. Unlike in Kripke Structures, because of the loss of precision
it might be the case that neither $M \models^{\alpha} \varphi$ nor $M
\not \models^{\alpha} \varphi$ is true for an abstract model $M$. However,
we assume that the models are \emph{consistent}, \ie for no model
$M$ both $M \models^{\alpha} \varphi$ and $M \not \models^{\alpha}
\varphi$ are true. This can typically be ensured by imposing a certain
syntactic restriction; concretely, in our definitions of GTSs and KMTSs,
we require that all must transitions are matched by the corresponding
may transitions.

Given an $\lmu$ semantics for abstract models, the refinement
relation has to meet a soundness property (also known as a weak
preservation property).  That is, we require that whenever $\spair{K}{s_K}\,\rho\,\spair{M}{s_M}$ holds for some specific states and
relation $\rho$ of type $\metaref$, then $\spair{M}{s_M} \models^{\alpha}
\varphi$ implies $\spair{K}{s_K} \models \varphi$ for every $\varphi \in \lmu$.

Before we give the formal definitions of the GTS and KMTS abstraction
frameworks, we formally define the notion of an \emph{abstraction
formalism}.

\begin{definition} 
\label{def:absform}
We define an \emph{abstraction formalism} ${\cal F}$ as a quadruple
$\tuple{\calc,\calm,\metaref,\models^{\alpha}}$, where $\calc$ is a
class of concrete structures (\ie a subclass of $\mKS$); ${\cal M}$ is
a class of models, $\metaref$ is a class of (structural) refinement
relations between ${\cal C}$ and ${\cal M}$, and $\models^{\alpha}$
specifies the semantics of the logic ($\mu$-calculus) on abstract models.

\end{definition}
\def\absform{\tuple{\calc,\calm,\metaref,\models^{\alpha}}}
Below, we define the class of Generalised Kripke Modal Transition Systems
(\GTSs) \cite{SG:04}. The class of Kripke Modal Transition Systems
(\KMTSs) can be viewed as a specialisation of GTSs; in turn, Kripke
Structures can be considered as a specialisation of \KMTSs.

\begin{definition}
A \emph{Generalised Kripke Modal Transition System} (\GTS) is a tuple
$M = \langle S, S^0, R^+, R^-, L \rangle$ where:
\begin{compactitem}
\item $S$ is a set of states, 
  
\item $S^0 \subseteq S$ is a set of initial states,

\item $R^- \subseteq S \times S$ is the \emph{may} transition relation,

\item $R^+ \subseteq S \times 2^S$ is the \emph{must} transition relation;
we require that $s \rel{R^+} A$ implies $s \rel{R^-} t$ for all $t \in A$,

\item $L {:} S \to 2^\Lit$ is a labelling function; we require that $L(s)$
contains at most one of $p$ and $\neg p$ for all $s \in S, p \in \AP$.

\end{compactitem}
The system $M$ is called a \emph{Kripke Modal Transition System}
(\KMTS) if for all $s \in S, A \subseteq S$ for which $s \rel{R^+} A$
we have $|A| = 1$.  The class of all GTSs will be denoted with $\mGTS$
and the class of all \KMTSs is denoted by $\mKMTS$.

\end{definition}
Note than an \KMTS $M$ can be identified with a Kripke Structure, if
for all $s,s' \in S$ we have $s \rel{R^+} \{s'\}$ iff $s \rel{R^-} s'$
and $L(s)$ contains precisely one of $p$ and $\neg p$ for all $p \in \AP$.\\

The intuition behind a must hypertransition $s R^+ A$ is that it is
guaranteed that there is a transition from $s$ to some state in $A$,
but the exact state in $A$ to which this transition leads is not
determined upfront, offering some extra flexibility.  In contrast, the
extra condition that is imposed on \KMTSs ensures that the destination
of a must hypertransition \emph{is} determined.

Next, we formalise the notion of approximation, or refinement, between
concrete and abstract structures. The \emph{de-facto} approximation
relation for GTSs is \emph{mixed simulation}; this approximation relation
also appears under the names ``completeness preorder'' \cite{GJ:03,DN:05}
and ``refinement preorder'' \cite{WGC:11}.  Given that GTSs generalise
Kripke Structures, we define the notion of mixed simulation between
abstract structures only.

\begin{definition}
\label{def:mixsim}
Let $M_1 = \langle S_1, S_1^0, R^+_1, R^-_1, L_1 \rangle$ and
$M_2 = \langle S_2, S_2^0, R^+_2, R^-_2, L_2 \rangle$ be two GTSs. A relation
$H \subseteq S_1 \times S_2$ is a mixed simulation 
if $s_1 \rel{H} s_2$ implies
\begin{itemize}
\item $L_2(s_2) \subseteq L_1(s_1)$,

\item if $s_1 \rel{R_1^-} s_1'$, then there exists
$s_2' \in S_2$ such that $s_2 \rel{R_2^-} s_2'$ and $s_1' \rel{H} s_2'$,

\item if $s_2 \rel{R_2^+} A_2$, then there exists
$A_1 \subseteq S_1$ such that $s_1 \rel{R_1^+} A_1$, and for every
$s_1' \in A_1$ there is some $s_2' \in A_2$ such that
$s_1' \rel{H} s_2'$.

\end{itemize}
We write $\langle M_1, s_1 \rangle \mixsim \langle M_2, s_2\rangle$
if $s_1 \rel{H} s_2$ for some mixed simulation $H$.  Mixed simulation
between models $M_1$ and $M_2$, denoted $M_1 \mixsim M_2$, holds
iff for all initial states $s^0_1 \in S^0_1$ there is a corresponding
initial state $s^0_2 \in S^0_2$ such that $\tuple{M_1,s^0_1} \mixsim
\tuple{M_2,s^0_2}$.

\end{definition}

We proceed to define the \emph{standard inductive semantics} (SIS)
of a $\mu$-calculus formula in the setting of GTSs. This can be
done in more than one way; our definition is taken from 
Shoham and Grumberg~\cite{SG:08}.
\newcommand{\semtrue}[2]{\sem{#1}_{\texttt{tt}}^{#2}}
\newcommand{\semfalse}[2]{\sem{#1}_{\texttt{ff}}^{#2}}

\begin{definition} 
The standard inductive semantics of a formula $\varphi$ is inductively
defined by two functions $\semtrue{\_}{}$ and $\semfalse{\_}{}$, in the
context of a GTS $M = \langle S, R^+, R^-, L \rangle$ and an environment
$\eta \ap \mc{V} \to 2^S$ assigning sets of states to propositional
variables:

\[
\begin{array}{lclp{1cm}lcl}
\semtrue{\true}{^\eta} &=& S &  & \semfalse{\true}{^\eta} &=& \emptyset \\

\semtrue{\false}{^\eta} &=& \emptyset &  & \semfalse{\false}{^\eta} &=& S \\

\semtrue{l}{^\eta} &=& \{s \in S ~|~ l \in L(s)\} &  
   & \semfalse{l}{^\eta} &=& \{s \in S ~|~ \neg l \in L(s)\} \\

\semtrue{X}{^\eta} &=& \eta(X) &  & \semfalse{X}{^\eta} &=& \eta(X) \\

\semtrue{\varphi \wedge \psi}{^\eta} &=& \semtrue{\varphi}{^\eta} \cap \semtrue{\psi}{^\eta} &  
   & \semfalse{\varphi \wedge \psi}{^\eta} &=&\semfalse{\varphi}{^\eta} \cup \semfalse{\psi}{^\eta}\\

\semtrue{\varphi \vee \psi}{^\eta} &=& \semtrue{\varphi}{^\eta} \cup \semtrue{\psi}{^\eta} &  
   & \semfalse{\varphi \vee \psi}{^\eta} &=&\semfalse{\varphi}{^\eta} \cap \semfalse{\psi}{^\eta}\\

\semtrue{\lozenge \varphi}{^\eta} &=&  \tilde\lozenge \semtrue{\varphi}{^\eta} &
   & \semfalse{\lozenge \varphi}{^\eta} &=& \tilde\Box \semfalse{\varphi}{^\eta} \\

\semtrue{\Box \varphi}{^\eta} &=&  \tilde\Box \semtrue{\varphi}{^\eta} &
   & \semfalse{\Box \varphi}{^\eta} &=& \tilde\lozenge \semfalse{\varphi}{^\eta} \\

\semtrue{\mu X. \varphi}{^\eta} &=&  \mu U. \semtrue{\varphi}{^{\eta[X := U]}} &
   & \semfalse{\mu X. \varphi}{^\eta} &=& \nu U. \semfalse{\varphi}{^{\eta[X := U]}} \\

\semtrue{\nu X. \varphi}{^\eta} &=&  \nu U. \semtrue{\varphi}{^{\eta[X := U]}} &
   & \semfalse{\nu X. \varphi}{^\eta} &=& \mu U. \semfalse{\varphi}{^{\eta[X := U]}}
\end{array}
\]
Here, we used the following two abbreviations:
\[
\begin{array}{lcl}
\tilde\lozenge U = \{s \in S ~|~
\exists A \subseteq S.~ s \rel{R^+} A \wedge A \subseteq U \}
&\qquad \text{and}\qquad&
\tilde\Box U = \{s \in S ~|~ \forall t \in S.~ s \rel{R^-} t \Rightarrow
t \in U \}
\end{array}
\]
In case the formula $\varphi$ is \emph{closed}, its semantics is
independent of the environment $\eta$, and we drop $\eta$ from the
semantic brackets.  A closed formula $\varphi$ is \emph{true} in
a state $s \in S$, denoted $\tuple{M,s} \models^{SIS} \varphi$ (or
simply $\tuple{M,s} \models \varphi$ if the setting is clear from the
context) if $s \in \semtrue{\varphi}{}$; $\varphi$ is \emph{false} in $s$,
denoted $\tuple{M,s} \not\models^{SIS} \varphi$ ($\tuple{M,s} \not\models
\varphi$) if $s \in \semfalse{\varphi}{}$ and it is \emph{unknown} in $s$
otherwise.

\end{definition}
Note that if $M$ is a Kripke structure, it always holds that either
$\tuple{M,s} \models \varphi$ or $\tuple{M,s} \not\models \varphi$,
and the satisfaction relation coincides with the usual semantics for
the $\mu$-calculus.

By abuse of notation, we make $\mGTS$ also stand for the abstraction
formalism $\tuple{\mKS, \mGTS, \mixsim, \models^{SIS}}$ Likewise, we
make $\mKMTS$ stand for the abstraction formalism $\tuple{\mKS, \mKMTS,
\mixsim, \models^{SIS}}$.\\

In several works on abstraction \cite{WGC:11,SG:08}, the authors restrict
themselves to a particular abstraction setting of the form $\tuple{C,
\rho,S}$, in which we are given two sets of concrete and abstract objects
and a fixed abstraction relation; properties of interest (precision,
see~\cite{SG:08}, consistency but also expressiveness, see~\cite{WGC:11})
are then analysed relatively to such a fixed abstraction relation. The
abstraction setting we define below formalises the approach of Wei
\emph{et al.}~\cite{WGC:11}; it differs from \cite{SG:08} in that $C$
is assumed to be a set of concrete states without a structure, whereas
in \cite{SG:08} it is assumed to be a Kripke Structure.

\begin{definition}
 An \emph{abstraction setting} is a pair $\tuple{\calf,
 \tuple{C,\rho,S}}$, such that $\calf$ is an abstraction formalism, $C$
 and $S$ are sets, and $\rho \,\subseteq\, C \times S$ is an approximation
 relation.

\end{definition}

\section{On the Expressiveness of GTSs and KMTSs}
\label{sec:expressiveness}

In Section~\ref{subsec:expressiveness} we first define the notion
of expressiveness, based on the definition of semantics of an abstract model that is used by \eg Godefroid and Jagadeesan
in~\cite{GJ:03}.  Then, in Section~\ref{subsec:cont_expr}, we formalise
the notion of expressiveness used by Wei \emph{et al} in~\cite{WGC:11}.

\subsection{Expressiveness}
\label{subsec:expressiveness}

The expressiveness of an abstraction framework characterises classes
of concretisations (refinements, implementations), that one is able to
capture using abstract models from  the framework. The definition is
clear and intuitive; however, it still depends on the choice of what
constitutes the ``proper'' concrete semantics of an abstract model. In
most cases \cite{GJ:03,GC:05,LNW:07}, the latter is defined simply as the
class of all those Kripke structures that refine a given model. Formally,
the definition is as follows:

\begin{definition}
\label{def:concret}
Given an abstraction formalism ${\cal F}\,=\,\absform$ the class of \emph{concretisations} of a state $\spair{M}{s}$, where $M \in {\cal M}$ is defined as: 
\begin{center}
$\concret{M}{s}{\calf} \,\eqdef\,  \{\spair{K}{s_K} \in {\cal C} \,\mid\, \spair{K}{s_K} \metaref \spair{M}{s}\}$
\end{center}
\end{definition}
Having fixed the semantics of an abstract model, we may now define the
corresponding notion of expressiveness.

\begin{definition}
\label{def:express}
Let $\calf_1 = \tuple{\calc,\calm_1,\metaref_1,\absmodels_1}$
and $\calf_2 = \tuple{\calc,\calm_2,\metaref_2,\absmodels_2}$ be
two abstraction formalisms. We say that ${\cal F}_2$ is \emph{more
expressive than ${\cal F}_1$}, denoted $\calf_1 \sqsubseteq_{ex}
\calf_2$, if for all $\spair{M_1}{s_1} \in {\cal M}_1$ there exists
$\spair{M_2}{s_2} \in {\cal M}_2$ such that $\concret{M_1}{s_1}{\calf_1}
= \concret{M_2}{s_2}{\calf_2}$. We say that ${\cal F}_2$ is
\emph{strictly more expressive than ${\cal F}_1$}, denoted $\calf_1
\sqsubset_{ex} \calf_2$, if $\calf_1 \sqsubseteq_{ex} \calf_2$ and
there is some $\spair{M_2}{s_2} \in {\cal M}_2$ such that
$\concret{M_2}{s_2}{\calf_2} \not= \concret{M_1}{s_1}{\calf_1}$ for
every $\spair{M_1}{s_1} \in {\cal M}_1$.

\end{definition}

\begin{theorem}
\label{thm:more_expr}
 GTSs are strictly more expressive than KMTSs, \ie $\mKMTS
 \sqsubset_{ex} \mGTS$.

\end{theorem}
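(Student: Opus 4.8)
The statement $\mKMTS \sqsubset_{ex} \mGTS$ requires two things: first that $\mKMTS \sqsubseteq_{ex} \mGTS$ (every KMTS is matched by a GTS), and second that some GTS has a class of concretisations that no KMTS can realise. The first part is immediate: every KMTS is already a GTS (it is the special case where all must hypertransitions are singletons), so for a given $\spair{M_1}{s_1}$ we take $\spair{M_2}{s_2} = \spair{M_1}{s_1}$ and $\concret{M_1}{s_1}{\mKMTS} = \concret{M_2}{s_2}{\mGTS}$ holds trivially, since mixed simulation and the semantics are defined identically on both. So the whole content is in the strictness.

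Let me think about what concretisations look like. Given a state $\spair{M}{s}$, its concretisation class is $\{\spair{K}{s_K} \mid \spair{K}{s_K} \mixsim \spair{M}{s}\}$, the set of Kripke Structures that mixed-simulate into $s$. The plan is to exhibit a specific small GTS state $\spair{M_2}{s_2}$ whose must hypertransition forces a genuine disjunction — a property along the lines of ``$\lozenge l_1 \vee \lozenge l_2$'' where no single successor is guaranteed — and then argue that no KMTS state can have exactly the same concretisation class. The natural candidate is a GTS with a single must hypertransition $s_2 \rel{R_2^+} \{a, b\}$ where $a$ is labelled with some literal $p$ and $b$ with $\neg p$ (or with distinct atomic propositions), together with may transitions to both, so that every concretisation must have a successor in $\{a,b\}$'s image, but is free to pick either.

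The crux is the lower-bound argument: I must show that for every KMTS state $\spair{M_1}{s_1}$, the class $\concret{M_1}{s_1}{\mKMTS}$ differs from $\concret{M_2}{s_2}{\mGTS}$. The key structural fact I would exploit is that in a KMTS every must transition goes to a single target, so a must edge forces every concretisation to have a successor mixed-simulating one fixed abstract state; the hypertransition, by contrast, forces only that \emph{some} successor lies in a set. Concretely, I expect to build two Kripke Structures $K_a$ and $K_b$ — one whose only successor satisfies $p$, one whose only successor satisfies $\neg p$ — that are both concretisations of $\spair{M_2}{s_2}$ (each matches the hypertransition via a different element) but such that no KMTS state can contain both in its concretisation class while excluding some third structure that $\spair{M_2}{s_2}$ rules out. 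The delicate point, and the main obstacle, is the \emph{closure} direction: I must verify that the candidate GTS concretisation class genuinely cannot be captured, i.e. rule out \emph{all} KMTSs including those with many states and redundant must transitions, rather than just the obvious singleton-must candidates. I would handle this by characterising $\concret{M_1}{s_1}{\mKMTS}$ via the must transitions of $s_1$: if $s_1$ has any must transition $s_1 \rel{R_1^+} \{t\}$, then every concretisation must have a successor refining $t$, which would wrongly exclude either $K_a$ or $K_b$; and if $s_1$ has no must transition, then a Kripke Structure with \emph{no} successors at all is a concretisation of $s_1$ but not of $s_2$ (since $s_2$'s hypertransition forces a successor). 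This dichotomy on whether $s_1$ emits a must transition is what closes the argument, and making the ``refining $t$'' exclusion watertight against arbitrary label sets is where the care is needed.
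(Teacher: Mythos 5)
There is a genuine gap, and it sits exactly where you flagged ``where the care is needed''. Your primary candidate --- a must hypertransition $s_2 \rel{R^+} \{a,b\}$ with $a,b$ deadlocked and distinguished only by the labels $p$ and $\neg p$ --- does not work at all: every state of a Kripke Structure carries exactly one of $p,\neg p$, so every deadlocked concrete state refines either $a$ or $b$. The disjunction expressed by the hypertransition is therefore vacuous, and $\concret{M_2}{s_2}{\mGTS}$ (``at least one successor, all successors deadlocked'') is captured by the KMTS with a single may/must transition to one deadlocked, \emph{unlabelled} state $t$ (recall that KMTS labellings need only contain \emph{at most} one of $p,\neg p$). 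This same observation breaks your exclusion step in general: a must transition $s_1 \rel{R^+}\{t\}$ with $L(t)=\emptyset$ and no must transition out of $t$ excludes \emph{neither} $K_a$ nor $K_b$, so the dichotomy ``no must transition $\Rightarrow$ deadlock is a spurious concretisation; must transition $\Rightarrow$ $K_a$ or $K_b$ is excluded'' has a third, unhandled horn. For such a neutral $t$ the correct argument runs in the opposite direction: $t$ is then refined by concrete states that refine neither $a$ nor $b$, so $\concret{M_1}{s_1}{\mKMTS}$ is too \emph{large}, not too small.

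The paper closes this hole in two ways. First, its witness GTS distinguishes the members of the hyper-target \emph{behaviourally}, not just by labels: the hypertransition from $s$ points to $\{s,q\}$ where $s$ itself (labelled $a$) carries a must hypertransition, hence forces a successor, while $q$ (labelled $\neg a$) is deadlocked. Second, it proves a preliminary closure fact: every state reachable in a candidate KMTS $M$ must either be labelled $a$ and emit a must transition, or be labelled $\neg a$ and be deadlocked --- otherwise $M$ admits a refinement outside $\concret{G}{s}{\mGTS}$ (\eg via a deadlocked $a$-state). Only with this fact in hand does the two-case exclusion ($p_2$ killed if the must-target has a must transition, $p_1$ killed if it is labelled $\neg a$) become exhaustive. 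Your sketch can be repaired along these lines (or by using two independent atomic propositions together with the ``class too large'' argument for neutral targets), but as stated the example and the key exclusion lemma are both incorrect.
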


\begin{proof} GTSs subsume KMTSs syntactically, and both abstraction
formalisms use the same approximation relation. Therefore they are at
least as expressive as KMTSs, \ie $\mKMTS \sqsubseteq_{ex} \mGTS$. We will
prove that they are strictly more expressive than KMTSs by showing that a
certain GTS cannot be matched by a semantically equivalent KMTS. 

Consider the GTS $G$ depicted below, where $L(s) = \{a\}$ and $L(q)
= \{\neg a\}$; the labelling is denoted with square brackets, the
must hypertransitions are depicted by solid transitions pointing to a rectangle
containing one or more states, whereas dashed arrows are the may transitions.

\tikzset{every loop/.style={min distance=10mm, looseness=10}}
\begin{center}
\vspace{-10pt}

\tikzset{my loop/.style={->,to path={ .. controls + (80:0.5) and +(120:0.5) .. (\tikztotarget) \tikztonodes}}
}

\begin{tikzpicture}[->,>=stealth', auto, node distance=25pt]
 \tikzstyle{every state}=[text=black,inner sep=1pt, minimum size=3pt, draw=none]



\node[label={[label distance=-4pt]180:\scriptsize $[a]$},state] (t1) {\scriptsize $s$};
\node[state,shape=rectangle,rounded corners, minimum width=120pt,minimum height=25pt, fill=none, draw] (t01) [right of=t1]  {};
\node[label={[label distance=-4pt]0: \scriptsize $[\neg a]$},state] (t2) [right of=t01] {\scriptsize $q$};

\draw[->]
(t1) edge [my loop]  (t01.north)
(t1) edge[dashed] (t2)
;
\end{tikzpicture}
\vspace{-3pt}

\end{center}
Suppose, towards a contradiction, that for a certain $M \in \mMTS$,
and a certain state $s_M$ of $M$, we have $\concret{M}{s_M}{\mMTS} =
\concret{G}{s}{\mGTS}$.
First, observe that for every state $q_M$, reachable from $\tuple{M,s_M}$,
one of the following holds:
\begin{compactenum}
 \item $a \in L(q_M)$ and there is a must transition starting at $q_M$
 \item $\neg a \in L(q_M)$ and there is no transition starting at $q_M$
\end{compactenum}
To prove the above, one can show that these two properties, which hold
on all (two) reachable states of $G$, are preserved ``backwards''
by mixed simulation, so they hold in all states of all refinements
of $\tuple{G,s}$. On the other hand, if one of them did not hold
in some state of $M$, then $M$ would have a refinement outside
$\concret{G}{s}{\mGTS}$.\\

\noindent
The next observation is that both states $p_1$ and $p_2$ depicted below
are valid concretisations of $\tuple{G,s}$:
\begin{center}

\begin{tikzpicture}[->,>=stealth', auto, node distance=25pt]

 \tikzstyle{every state}=[text=black,inner sep=1pt, minimum size=3pt,draw=none]
 
\node[label=right:{\scriptsize $[a]$},state] (p1) {\scriptsize $p_1$};

\draw[->]
(p1) edge [loop below] (p1)
;

\node[label=right:{\scriptsize $[a]$},state] (p2) [right of=p1,xshift=80pt] {\scriptsize $p_2$};
\node[label=right:{\scriptsize $[\neg a]$},state] (p3) [below of=p2] {\scriptsize $p_3$};

\draw[->]
(p2)  edge (p3)
;
\end{tikzpicture}
\vspace{-5pt}
\end{center}
Let us now focus on $M$. Since the deadlocked process is not a
concretisation of $\tuple{G,s}$, there has to be a must transition
starting in $\tuple{M,s_M}$ (case 1 above), say, $\tuple{M,s_M}
\rmust \tuple{M,s'_M}$. Consider two cases:

\begin{compactenum}
 \item If there is a must transition starting at $s'_M$, then $p_2$ is not a concretisation of $\tuple{M,s_M}$, a contradiction.
 \item If $\neg a \in L(q_M)$, then $p_1$ is not a concretisation of $\tuple{M,s_M}$, a contradiction.
 \end{compactenum}
Both cases lead to the desired contradiction.
\end{proof}
Note that in the above proof we did not make any assumption on the
finiteness of $M$. That means that the GTS $G$ cannot be semantically
matched even by an infinite-state KMTS, \ie our expressiveness result is
rather robust.

\subsection{Contextual Expressiveness}
\label{subsec:cont_expr}

We already mentioned that the definition of expressiveness crucially
depends on the choice of concretisations of an abstract model. Apart from
taking the entire class, we may consider some restricted subclasses,
for instance only deterministic implementations \cite{FE:09}. 

From the model checking perspective, the most important alternative
definition was given in \cite{WGC:11}. There, properties of an abstraction
framework are always considered in the context of a specific abstraction,
namely a relation $\rho \subseteq C \times S$  for some particular sets
$C$ and $S$ of concrete and abstract objects, see the definition of the
\emph{abstraction setting}.

Given a model whose states consist of elements of $S$, we only consider
concretisations from the given abstraction setting (based on $\rho$). The
definition below is based on \cite{WGC:11}.

\begin{definition}
\label{def:concret-rel}
Given a fixed abstraction setting $\tuple{\calf, \tuple{C,\rho,S}}$,
we define the set of concretisations of a state, $\spair{M}{s}$, where $M
\in {\cal M}$, and $\states{M} \subseteq S$, \emph{in the context of
$\rho$}, as:

\begin{center}
$\concretfix{M}{s}{\rho}{\calf} \,\eqdef\, \{\spair{K}{s_K} \in C \,\mid\, \states{K} \subseteq C, \, \spair{K}{s_K} \rho \spair{M}{s} $ and $\rho$ is a refinement relation in the sense of $\metaref$ between $K$ and $M$\}
\end{center}
\end{definition}
With the semantics of abstract models restricted to the specific instance
of abstraction, we automatically obtain another 
notion of expressiveness. The one
given in \cite{WGC:11} is somewhat ambiguous: 
\begin{quote}``Two partial modelling
formalisms are \emph{expressively equivalent} if and only if for every
transition system $M$ from one formalism, there exists a transition
system $M'$ from the other, such that $M$ and $M'$ are semantically
equivalent.''
\end{quote}
The semantic equivalence to which the quote refers means having equal
sets of concretisations in the context of a specific abstraction (in
our terms, the equality of the $\concretfix{M}{s}{\rho}{\calf}$ sets).

For $M$, we can assume a specific setting $\tuple{\calf,
\tuple{C,\rho,S}}$. However, it is not immediately obvious whether there
are restrictions on the refinement relation $\rho'$ with which $M'$ is
supposed to match $M$: can it be arbitrary, or should it be in some way
related to $\rho$? We believe the best way is to allow for an arbitrary
abstraction setting $\tuple{C,\rho',S'}$ for $M'$, with one practical
assumption that whenever $S' \cap S \neq \emptyset$, then $\rho'$
conservatively extends $\rho$ on $S \cap S'$, \ie  $\{ s_C \,\mid\,
\exists s \in S' \cap S. \, s_C \rel{\rho'} s\} =\{ s_C \,\mid\,
\exists s \in S' \cap S. \, s_C \rel{\rho} s\}$.

The following definition of expressiveness, which we dub \emph{contextual
expressiveness}, so as to avoid confusion with the notion of
expressiveness we considered in the previous section, is based on
\cite{WGC:11}; we clarify their notion by making the dependence on the
abstraction context explicit.

\begin{definition}
\label{def:cont_express}
Let $\calf_1 = \tuple{\calc,\calm_1,\metaref_1,\absmodels_1}$
and $\calf_2 = \tuple{\calc,\calm_2,\metaref_2,\absmodels_2}$ be
two abstraction formalisms. Then ${\cal F}_2$ is \emph{contextually more
expressive than ${\cal F}_1$}, denoted ${\cal F}_1 \sqsubseteq_{cex}
{\cal F}_2$ if for all abstraction settings $\tuple{\calf_1,
\tuple{ C, \rho, S}}$ and all model-state pairs $\spair{M_1}{s_1}
\in {\cal M}_1$ with $\states{M_1} \subseteq S$, there is
an abstraction setting $\tuple{\calf_2,\tuple{C,\rho',S'}}$
$\spair{M_2}{s_2} \in {\cal M}_2$, with $\states{M_2}
\subseteq S'$ such that $\concretfix{M_1}{s_1}{\calf_1}{\rho} =
\concretfix{M_2}{s_2}{\calf_2}{\rho'}$.

\end{definition}

\begin{theorem}
 $\mGTS$ and $\mMTS$ are contextually equally expressive 
 \ie $\mGTS \sqsubseteq_{cex} \mMTS$ and $\mMTS \sqsubseteq_{cex}
 \mGTS$, see~\cite{WGC:11}.
\end{theorem}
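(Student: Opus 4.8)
The plan is to establish the two inclusions separately. The direction $\mMTS \sqsubseteq_{cex} \mGTS$ is immediate: since every KMTS is, syntactically, a GTS in which all must hypertransitions have singleton targets, and both formalisms carry mixed simulation as their refinement relation, given any KMTS $M_1$ with context $\tuple{C,\rho,S}$ and state $s_1$ I simply take $M_2 = M_1$ viewed as a GTS together with the identical context $\tuple{C,\rho,S}$. The mixed-simulation conditions instantiate in exactly the same way, so $\concretfix{M_1}{s_1}{\rho}{\mMTS} = \concretfix{M_2}{s_1}{\rho}{\mGTS}$ holds by definition.

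The substantial direction is $\mGTS \sqsubseteq_{cex} \mMTS$, and the idea is to exploit the fact that the context fixes both the concrete set $C$ and the relation $\rho$, so we know in advance which concrete states may sit below which abstract states. Given a GTS $M_1 = \tuple{S,S^0,R^+,R^-,L}$ with context $\tuple{C,\rho,S}$ and a state $s_1$, I would build a KMTS $M_2$ by adding one fresh state $v_A$ for every hyper-target $A$ with $|A|>1$ occurring in $R^+$. Each hypertransition $s \rel{R^+} A$ is replaced by the ordinary must transition $s \rel{R^+} \{v_A\}$ (singleton targets are kept unchanged), together with its backing may transition $s \rel{R^-} v_A$, while the original may relation on $S$ is retained. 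The crucial step is the definition of the new refinement relation $\rho'$: it agrees with $\rho$ on $S$, and on the fresh states it is declared by $c \rel{\rho'} v_A$ iff $c \rel{\rho} t$ for some $t \in A$. Thus the concretisations of $v_A$ are exactly the union of those of the members of $A$, which is precisely the semantic content of the hypertransition to $A$. To make $v_A$ a legal state I give it the may transitions $v_A \rel{R^-} t''$ for every $t''$ that is a may-successor in $M_1$ of some member of $A$, the label $L(v_A) = \bigcap_{t \in A} L(t)$, and no must transitions; the result is a consistent KMTS (each must transition is backed by a may transition), and the context $\tuple{C,\rho',S'}$ with $S' = S \cup \{v_A\}$ conservatively extends $\tuple{C,\rho,S}$, since $\rho'$ and $\rho$ agree on $S = S \cap S'$.

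It then remains to check that $\concretfix{M_1}{s_1}{\rho}{\mGTS} = \concretfix{M_2}{s_1}{\rho'}{\mMTS}$, which I would do by proving, for every Kripke structure $K$ with states in $C$, that the restriction of $\rho$ to $K$ and $M_1$ is a mixed simulation iff the restriction of $\rho'$ to $K$ and $M_2$ is; the initial-state requirement transfers for free because $\rho'$ agrees with $\rho$ on $S \ni s_1$. Passing from $M_1$ to $M_2$, the only new proof obligations arise at pairs $\spair{c}{v_A}$: the must-condition is vacuous as $v_A$ carries no must transitions, the may-condition is discharged via the union-of-successors definition of $v_A$'s may edges, and the label-condition follows from $L(v_A) \subseteq L(t)$ for the witnessing $t$. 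Passing back from $M_2$ to $M_1$, the must-condition for a hypertransition $s \rel{R^+} A$ is recovered from the KMTS must transition $s \rel{R^+} \{v_A\}$ by unfolding $c' \rel{\rho'} v_A$ into $c' \rel{\rho} t$ for some $t \in A$.

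The main obstacle I anticipate is the may-condition in the direction from $M_2$ back to $M_1$: a may edge of $M_2$ may lead to a fresh state $v_A$, and this must be translated into a may edge of $M_1$ that lands inside $S$. This is exactly where the built-in consistency requirement of GTSs is essential — because $s \rel{R^+} A$ forces $s \rel{R^-} t$ for every $t \in A$, any concrete successor witnessing $c' \rel{\rho'} v_A$, hence $c' \rel{\rho} t$ for some $t \in A$, can be matched by the genuine may edge $s \rel{R^-} t$ of $M_1$. The second point needing care is to confirm that no must constraints are silently lost when hypertransitions are flattened: they remain fully enforced, but through the direct $\rho'$-links to the original members of $A$ rather than through $v_A$ itself, since $c \rel{\rho'} v_A$ always co-occurs with $c \rel{\rho'} t$ for the witnessing $t$.
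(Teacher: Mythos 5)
Your proposal is correct and follows essentially the same route as the paper, which does not prove this theorem itself but cites \cite{WGC:11} and sketches the underlying $\textsc{GtoK}$ transformation: introduce a fresh state per hyper-target $A$ whose concretisations under the new relation $\rho'$ are the union of the concretisations of the members of $A$, and redirect each hypertransition to a singleton must transition onto that state. Your fleshed-out version (union of may-successors for $v_A$, intersection of labels, the use of the must-implies-may consistency requirement to push may-witnesses back from $v_A$ into $A$, and the observation that must obligations survive because $c \rel{\rho'} v_A$ co-occurs with $c \rel{\rho'} t$) correctly supplies the details the paper leaves to the cited reference.
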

\newcommand{\gtok}{\textsc{GtoK}\xspace}
\newcommand{\gtoktr}[1]{\ensuremath{\textsc{GtoK}}(#1)}
We believe it is at this point instructive to explain why the GTS
we used to prove strictness of expressiveness in the previous section
does not work in the setting of contextual expressiveness.  For this,
we consider the transformation \gtok defined in~\cite{WGC:11}, which,
given a GTS in a specific abstraction setting $\tuple{C,\rho,S}$, produces
a contextually equally expressive KMTS. 

We first show that the application of \gtok does not produce an
expressively equivalent (in our sense) KMTS.  For this, we need to define
an abstraction setting. Let us consider a very simple one, with concrete
and abstract sets consisting of two elements, namely $C = \{s_C, q_C\}$
and $S = \{s,q\}$, and the description relation defined as $\rho =
\{(s_C,s),\, (q_C,q)\}$.

Now consider the GTS $G$ from the proof of Thm.~\ref{thm:more_expr}. The
set of concretisations of the state $\tuple{G,s}$ in the context of
$\rho$, $\concretfix{G}{s}{\mGTS}{\rho}$, consists of the following
three models:

\begin{center}

\begin{tikzpicture}[->,>=stealth', auto, node distance=25pt]
 \tikzset{every loop/.style={min distance=10mm, looseness=10}}
 \tikzstyle{every state}=[text=black,inner sep=1pt, minimum size=3pt,draw=none]

 
\node[label=right:{\scriptsize $[a]$},state] (sc1) {\scriptsize $s_C$};

\draw[->]
(sc1) edge [loop below] (sc1)
;

\node[label=right:{\scriptsize $[a]$},state] (sc2) [right of=sc1,xshift=80pt] {\scriptsize $s_C$};
\node[label=right:{\scriptsize $[\neg a]$},state] (sc21) [below of=sc2] {\scriptsize $q_C$};

\draw[->]
(sc2)  edge (sc21)
;

\node[label=right:{\scriptsize $[a]$},state] (sc3) [right of=sc2,xshift=80pt] {\scriptsize $s_C$};
\node[label=right:{\scriptsize $[\neg a]$},state] (sc31) [below of=sc3] {\scriptsize $q_C$};

\draw[->]
(sc3) edge [loop left] (sc3)
(sc3)  edge (sc31)
;

\end{tikzpicture}
\vspace{-5pt}
\end{center}
Next, we consider the transformation \gtok, and apply it to the
GTS $G$ from the proof of Thm.\ref{thm:more_expr}. The starting point
of the transformation is the KMTS consisting of the same states as the
original GTS, with every original standard transition (with one state
as a target). Then for every true hypertransition $p \rel{R}^+ A$, \gtok
introduces a new state that mimics the target of the hypertransition $p_A$
such that $\gamma(p_A) = \bigcup_{p' \in A} \gamma(p')$. Finally, the
proper may transitions are added to the newly introduced states.
This results in the following KMTS for our GTS $G$:

 \begin{center}
 \begin{tikzpicture}[->,>=stealth', auto, node distance=25pt]
 \tikzset{every loop/.style={min distance=10mm, looseness=10}}
 \tikzstyle{every state}=[text=black,inner sep=1pt, minimum size=3pt,draw=none]


\node[state] (s) {\scriptsize $s$};
\node[state] (sq) [right of=s,xshift=20pt] {\scriptsize $sq$};
\node[state] (q) [right of=sq,xshift=20pt] {\scriptsize $q$};

\draw[->]
(s)  edge[bend left,solid] (sq)
(sq)  edge[loop above, dashed] (sq)
(sq)  edge[bend left,dashed] (s)
(sq)  edge[dashed] (q)
;

\end{tikzpicture}
\vspace{-5pt}
\end{center}
It is not too hard to check that $\concretfix{G}{s}{\mGTS}{\rho}
= \concretfix{\gtoktr{G}}{s}{\mMTS}{\rho'}$, where $\rho'$ is the
abstraction relation resulting from the transformation, \ie  $\rho'
= \{(s_C,s),\, (q_C,q), \,(s_C,sq),\,(q_C,sq)\}$. This means that $G$
does not show that $\mGTS$ is contextually strictly more expressive than
$\mMTS$ (of course, this could never be the case as the transformation
$\gtok$ is used to prove that an \emph{arbitrary} GTS can be converted to
a contextually equivalent KMTS).

On the other hand, the KMTS above is not expressively equivalent to $G$,
because for instance the following Kripke structure is a concretisation
of $\gtoktr{G}$:
\begin{center}
\vspace{-10pt}
 \begin{tikzpicture}[->,>=stealth', auto, node distance=25pt]
 \tikzstyle{every state}=[text=black,inner sep=1pt, minimum size=3pt,draw=none]


\node[label=left:{\scriptsize $[a]$},state] (q1) {\scriptsize $q_1$};
\node[label=right:{\scriptsize $[a]$},state] (q2) [right of=sq,xshift=10pt] {\scriptsize $q_2$};
\node[draw=none] (q3) [above of=q1, yshift=-10pt] {};

\draw[->]
(q1)  edge[solid] (q2)
;

\end{tikzpicture}
\vspace{-5pt}
\end{center} 

From the fact that $\mGTS$ and $\mMTS$ are contextually equally
expressive but $\mGTS$ are strictly more expressive than
$\mMTS$, it follows that contextual expressiveness does not imply
expressiveness in general. In fact, if we consider arbitrary abstraction
frameworks, these two notions are incomparable. The reason is
that a certain framework may only allow maximal refinement relations (\eg
only maximal mixed simulations) between concrete and abstract models,
which will make it less contextually expressive than if we allow all
possible relations.

\begin{theorem}
Expressiveness and contextual expressiveness do not imply one another.

\end{theorem}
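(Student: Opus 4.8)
The plan is to prove the two non-implications separately, each by exhibiting a separating pair of formalisms. For \emph{contextual expressiveness does not imply expressiveness} I would reuse the pair $\calf_1 = \mGTS$ and $\calf_2 = \mKMTS$. The preceding theorem (from~\cite{WGC:11}) gives $\mGTS \sqsubseteq_{cex} \mKMTS$, so $\calf_1 \sqsubseteq_{cex} \calf_2$. On the other hand, Theorem~\ref{thm:more_expr} states $\mKMTS \sqsubset_{ex} \mGTS$; unfolding Definition~\ref{def:express}, its strictness clause asserts exactly that some $\spair{G}{s} \in \mGTS$ has a concretisation set matched by no KMTS state, which is precisely $\mGTS \not\sqsubseteq_{ex} \mKMTS$, i.e.\ $\calf_1 \not\sqsubseteq_{ex} \calf_2$. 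This direction is thus pure bookkeeping over the two preceding results.

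For the converse, \emph{expressiveness does not imply contextual expressiveness}, I would fix the model class and semantics and vary only the admissible refinement relations. Starting from $\calf_1 = \tuple{\mKS,\mKMTS,\mixsim,\models^{SIS}}$, let $\calf_2$ be identical except that its meta-relation admits \emph{only maximal} mixed simulations between a concrete structure and a model. First I would observe $\calf_1 \sqsubseteq_{ex} \calf_2$: mixed simulations are closed under arbitrary union, so between any $K$ and $M$ there is a largest one, whence $\spair{K}{s_K} \mixsim \spair{M}{s}$ holds for some mixed simulation iff it holds for the maximal one. Therefore $\concret{M}{s}{\calf_1} = \concret{M}{s}{\calf_2}$ for every $\spair{M}{s}$, and matching each model with itself gives $\calf_1 \sqsubseteq_{ex} \calf_2$ (indeed the two are equally expressive). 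Standard expressiveness quantifies existentially over relations and is insensitive to the maximality restriction.

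The decisive step is to produce an abstraction setting $\tuple{\calf_1,\tuple{C,\rho,S}}$ and a model $\spair{M_1}{s_1}$, with $\rho$ a deliberately \emph{non-maximal} mixed simulation, whose contextual set $\concretfix{M_1}{s_1}{\calf_1}{\rho}$ no admissible $\calf_2$-setting can reproduce, giving $\calf_1 \not\sqsubseteq_{cex} \calf_2$. The engine of the argument is that in a mixed simulation every reachable concrete state must be related to some abstract state; hence, with a fixed $\rho$ relating \emph{fewer} pairs than the maximal simulation, $\concretfix{M_1}{s_1}{\calf_1}{\rho}$ collects only those $K$ for which this small $\rho$ already witnesses simulation, a set I would design to omit certain refinements. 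In $\calf_2$, by contrast, the relation $\rho'$ used must coincide with the maximal mixed simulation on each concretisation, and by maximality it relates every relatable concrete state; this saturation forces the omitted refinements back in, so the sets cannot coincide.

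I expect the verification of this last step to be the main obstacle, because of the freedom in the matching $\calf_2$-setting: one may pick a fresh abstract set $S'$, relation $\rho'$ and model $M_2$, and in principle tailor $M_2$ so that the problematic concrete states are relatable to none of its states. Ruling this out requires pinning down a concrete, small finite $C$, $M_1$ and $\rho$ that can be enumerated, and then checking that the self-referential constraint ``$\rho'$ equals the maximal mixed simulation between $K$ and $M_2$ on every member'' is incompatible with the target set for every candidate $\tuple{C,\rho',S'}$ and $M_2$. Combining the two directions establishes the theorem, and the same maximal-versus-arbitrary phenomenon underlies the remark that on arbitrary frameworks the two notions are incomparable.
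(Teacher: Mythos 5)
Your decomposition and both separating pairs coincide with the paper's: the GTS/KMTS pair for the first non-implication, and, for the second, a pair of KMTS-based formalisms differing only in whether arbitrary or only maximal mixed simulations are admitted as refinements. The first direction is complete as you state it, and your observation that mixed simulations are closed under union, so that standard expressiveness is blind to the maximality restriction, is exactly the reason the paper gives for $\calf_1$ and $\calf_2$ being expressively equivalent (you spell it out in somewhat more detail than the paper does).

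For the second non-implication, however, your argument stops at a plan: you say you would ``design'' a setting whose contextual concretisation set cannot be matched, and you explicitly flag the verification as the main obstacle, but you never exhibit the witnessing $C$, $M_1$ and $\rho$. Since that witness is the entire content of this direction, the proof is incomplete as it stands. The paper's witness is very small: take the two-state KMTS $M^{all}$ with only may transitions ($s^{all}_1$ may-steps to $s^{all}_2$, which has a may self-loop) and empty labelling, so that each of its states abstracts \emph{every} Kripke structure; take the concrete $K$ with a single transition $q_1 \to q_2$, the sets $C = \{q_1,q_2\}$ and $S = \{s^{all}_1,s^{all}_2\}$, and the non-maximal mixed simulation $\rho = \{(q_1,s^{all}_1),(q_2,s^{all}_2)\}$. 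The unique maximal mixed simulation between $K$ and $M^{all}$ is the full product $\{q_1,q_2\}\times\{s^{all}_1,s^{all}_2\}$, so $\rho$ itself is inadmissible in $\calf_2$, and the saturation effect you describe forces the concretisation sets apart. One caveat you correctly anticipate: the definition of $\sqsubseteq_{cex}$ also permits matching via a different model $M_2$ over a fresh $S'$, and the paper's own proof only argues against matching with $M^{all}$ under the maximal relation; if you complete your argument, you should make the quantification over all candidate $\tuple{C,\rho',S'}$ and $M_2$ explicit rather than leaving it implicit.
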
 

\begin{proof}
For an example showing that contextual expressiveness does not 
imply expressiveness we can use the example of GTSs and KMTSs.

To prove that the implication does not hold in the other direction
as well, we can use frameworks based on the same class of models, but
different refinement relations. Suppose that $\calf_1$ is a standard
KMTS framework with mixed simulation as a meta-refinement relation,
and $\calf_2$ differs from $\calf_1$  in that it allows only those
mixed simulations that are maximal, for a given concrete and abstract
model. Clearly, $\calf_1$ and $\calf_2$ are expressively equivalent. On
the other hand, they are not equivalent when it comes to contextual
expressiveness.

In order to see this, consider a very simple KMTS $M^{all}$ depicted
below, and its concretisation $K$. Note that both states of $M^{all}$ abstract the entire class of Kripke structures.

 \begin{center}
\begin{tabular}{lll}
 \begin{tikzpicture}[->,>=stealth', auto, node
 distance=25pt] \tikzset{every loop/.style={min distance=10mm,
 looseness=10}} \tikzstyle{every state}=[text=black, font=\scriptsize, inner sep=1pt,
 minimum size=3pt,draw=none]



\node[state] (q1) {\scriptsize $q_1$};
\node[state] (q2) [below of=q1, yshift=-10pt] {\scriptsize $q_2$};
\node [below of = q2] {$K$};

\draw[->]
(q1)  edge[solid] (q2)
;

\node[state] (s1) [right of = q1, xshift=40pt] {\scriptsize $s^{all}_1$};
\node[state] (s2) [below of=s1, yshift=-10pt] {\scriptsize $s^{all}_2$};
\node [below of = s2] {$M^{all}$};
\node[below of = q1, xshift = 32pt, yshift = 10pt]{\scriptsize $\rho$};

\draw[->]
(s1)  edge[dashed] (s2)
(s2)  edge[loop right, dashed] (s2)
;

\draw[-]
(q1) edge[dotted] (s1)
(q2) edge[dotted] (s2)
;

\end{tikzpicture}
&
\hspace{80pt}
&
\begin{tikzpicture}[->,>=stealth', auto, node
 distance=25pt] \tikzset{every loop/.style={min distance=10mm,
 looseness=10}} \tikzstyle{every state}=[text=black, font=\scriptsize, inner sep=1pt,
 minimum size=3pt,draw=none]



\node[state] (q1) {\scriptsize $q_1$};
\node[state] (q2) [below of=q1, yshift=-10pt] {\scriptsize $q_2$};
\node [below of = q2] {$K$};

\draw[->]
(q1)  edge[solid] (q2)
;

\node[state] (s1) [right of = q1, xshift=40pt] {\scriptsize $s^{all}_1$};
\node[state] (s2) [below of=s1, yshift=-10pt] {\scriptsize $s^{all}_2$};
\node [below of = s2] {$M^{all}$};

\draw[->]
(s1)  edge[dashed] (s2)
(s2)  edge[loop right, dashed] (s2)
;

\draw[-]
(q1) edge[dotted] (s1)
(q1) edge[dotted] (s2)
(q2) edge[dotted] (s2)
(q2) edge[dotted] (s1)
;

\end{tikzpicture}
\end{tabular}
\vspace{-5pt}
\end{center} 
 
Let us now define an abstraction setting, $\tuple{\calf_1, \tuple{ \{q_1,q_2\}, \rho, \{s^{all}_1, s^{all}_2\}}}$, where $\rho(q_1) = s^{all}_1$ and $\rho(q_2) = s^{all}_2$ (on the left picture above). It is a valid abstraction setting; indeed, $\rho$ is a mixed simulation, so it is a proper refinement in the framework $\calf_1$. However, $\rho$ is not a refinement in $\calf_2$, because is is contained the relation $\{p_1, p_2\} \times \{s_1,s_2\}$ (depicted above on the right), which is the only maximal mixed simulation between $K$ and $M^{all}$. Since this maximal refinement is the only one that we are allowed to consider as a candidate for $\rho'$ to match $\rho$ in concretisations of $s^{all}_1$ and $s^{all}_2$, we observe that it is impossible to find a $\rho'$ with the property that $\concretfix{M^{all}}{s^{all}_1}{\calf_1}{\rho} = \concretfix{M^{all}}{s^{all}_1}{\calf_2}{\rho'}$. Hence $\calf_1$ and $\calf_2$ are not contextually expressively equivalent.
\end{proof}

\section{Completeness}
\label{sec:completeness}

\newcommand{\compl}[1]{\textbf{compl}(#1)}
\def\complpre{\sqsubseteq_{cmp}}
\def\expre{\sqsubseteq_{ex}}

Arguably the most important property of any abstraction formalism is
its degree of completeness. We will first provide a formal measure of
completeness; for a given formalism, it is defined as the set of formulae
that can be proved with a finite abstract model.

\begin{definition} 

Given an abstraction formalism ${\cal F} = \absform $. The 
\emph{completeness set} of ${\cal F}$, denoted $\compl{{\cal F}}$ is
defined as follows:
\[
\begin{array}{lll}
\compl{{\cal F}} &\,\eqdef\,& \{\varphi \in \lmu \,\mid\, \varphi \text{ is satisfiable and } \forall \tuple{K,s_K} \in {\cal C}. \, \tuple{K,s_K} \models \varphi \,\implies\,\\
& & \exists \tuple{M,s_M} \in {\cal M}.\, \tuple{K,s_K} \metaref \tuple{M,s_M} \,\wedge\, \tuple{M,s_M} \models^{\alpha} \varphi \}
\end{array}
\]
Furthermore, we define the \emph{relative completeness preorder} between 
abstraction formalisms ${\cal F}_1, {\cal F}_2$ as:
\begin{center}
${\cal F}_1 \complpre {\cal F}_2 \iffdef \compl{{\cal F}_1} \subseteq \compl{{\cal F}_2}$
\end{center}
\end{definition}
Note that we have assumed that all abstract models in ${\cal M}$ are finite.

\subsection{Expressiveness and completeness}

We start by showing some simple facts concerning the relationship
between expressiveness and completeness. One can observe an implication
in one direction, if we assume the setting with thorough semantics,
\ie when the satisfaction of a formula by an abstract model depends
on whether it is satisfied by all its concretisations. (note that this
semantics differs from the standard inductive semantics we introduced
in Section~\ref{sec:preliminaries}).

\begin{definition}
Given an abstraction formalism $\tuple{\calc,\calm,\metaref,\models^{t}}$,
we say that $\models^{t}$ is a \emph{thorough semantics} if it is
defined as (or satisfies):
 \begin{center}
  $\tuple{M,s} \models^{t} \varphi \,\iff\, \forall \tuple{K,s'} \in \concret{M}{s}{\calf}\,.\, \tuple{K,s'} \models \varphi$
 \end{center} 
\end{definition}

\begin{proposition}
Assuming thorough semantics, expressiveness containment implies
relative completeness containment. More precisely, if $\calf_1
=\tuple{\calc,\calm_1,\metaref_1,\models_{1}^t}$ and $\calf_2
=\tuple{\calc,\calm_2,\metaref_2,\models_{2}^t}$, where $\models_{1}^{t}$
and $\models_{2}^{t}$ are thorough semantics, then:
 \begin{center}
 ${\cal F}_1 \expre {\cal F}_2 \,\implies\, {\cal F}_1 \complpre {\cal F}_2$ 
 \end{center} 
\end{proposition}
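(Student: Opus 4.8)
The plan is to unfold the definition of the completeness set and use the expressiveness containment to transport a witnessing abstract model from $\calf_1$ to $\calf_2$. Concretely, I would start by taking an arbitrary $\varphi \in \compl{\calf_1}$ and show $\varphi \in \compl{\calf_2}$. By definition of $\compl{\calf_1}$, the formula $\varphi$ is satisfiable, and for every $\tuple{K,s_K} \in \calc$ with $\tuple{K,s_K} \models \varphi$ there is a model-state pair $\tuple{M_1,s_1} \in \calm_1$ with $\tuple{K,s_K} \metaref_1 \tuple{M_1,s_1}$ and $\tuple{M_1,s_1} \models_1^t \varphi$. My goal is to produce, for the \emph{same} $\tuple{K,s_K}$, a witness $\tuple{M_2,s_2} \in \calm_2$ with $\tuple{K,s_K} \metaref_2 \tuple{M_2,s_2}$ and $\tuple{M_2,s_2} \models_2^t \varphi$.

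First I would invoke $\calf_1 \expre \calf_2$: by Definition~\ref{def:express}, for the $\tuple{M_1,s_1} \in \calm_1$ obtained above there exists $\tuple{M_2,s_2} \in \calm_2$ with $\concret{M_1}{s_1}{\calf_1} = \concret{M_2}{s_2}{\calf_2}$. This equality of concretisation sets is the crux: it immediately gives both remaining obligations. For the refinement obligation, since $\tuple{K,s_K} \metaref_1 \tuple{M_1,s_1}$ we have $\tuple{K,s_K} \in \concret{M_1}{s_1}{\calf_1}$ by Definition~\ref{def:concret}, hence $\tuple{K,s_K} \in \concret{M_2}{s_2}{\calf_2}$, which unfolds precisely to $\tuple{K,s_K} \metaref_2 \tuple{M_2,s_2}$. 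For the satisfaction obligation, I would use the defining property of thorough semantics: $\tuple{M_2,s_2} \models_2^t \varphi$ holds iff every $\tuple{K',s'} \in \concret{M_2}{s_2}{\calf_2}$ satisfies $\varphi$. But $\concret{M_2}{s_2}{\calf_2} = \concret{M_1}{s_1}{\calf_1}$, and since $\tuple{M_1,s_1} \models_1^t \varphi$, thoroughness for $\models_1^t$ tells us every pair in $\concret{M_1}{s_1}{\calf_1}$ satisfies $\varphi$; transporting across the equality, every pair in $\concret{M_2}{s_2}{\calf_2}$ satisfies $\varphi$, so $\tuple{M_2,s_2} \models_2^t \varphi$. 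Satisfiability of $\varphi$ carries over unchanged. This establishes $\varphi \in \compl{\calf_2}$, hence $\compl{\calf_1} \subseteq \compl{\calf_2}$, \ie $\calf_1 \complpre \calf_2$.

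The only genuine subtlety—and the step I would scrutinise most carefully—is the role of thorough semantics in the satisfaction transport. The argument works because thoroughness makes $\models^t$ depend on the model \emph{solely} through its set of concretisations: two abstract states with identical concretisation sets are indistinguishable by $\models^t$. This is exactly why the hypothesis cannot be dropped, and it explains the paper's earlier remark that only under thorough semantics does one obtain the expressiveness-to-completeness implication; under the standard inductive semantics $\models^{SIS}$, equality of concretisation sets need not preserve satisfaction, since $\models^{SIS}$ reads the abstract structure directly rather than through its concretisations. A small point worth noting is that Definition~\ref{def:express} quantifies over \emph{all} $\tuple{M_1}{s_1} \in \calm_1$, so the expressiveness witness $\tuple{M_2,s_2}$ is available for precisely the $\tuple{M_1,s_1}$ that the completeness definition hands us; there is no mismatch in the quantifier order, since we choose the concrete structure $\tuple{K,s_K}$ first, then obtain its $\calf_1$-witness, then translate.
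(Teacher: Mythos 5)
Your proof is correct and follows essentially the same route as the paper's: unfold $\compl{\calf_1}$ for an arbitrary satisfying concrete state, obtain the $\calf_1$-witness, use $\expre$ to get an $\calf_2$-model with the same concretisation set, and transport satisfaction via the fact that thorough semantics depends only on that set. Your explicit discharge of the refinement obligation via the equality of concretisation sets is a detail the paper leaves implicit, but the argument is the same.
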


\begin{proof}
Take any $\varphi \in \compl{\calf_1}$, and an arbitrary $\tuple{K,s_K}
\in \mKS$ such that $\tuple{K,s_K} \models \varphi$. From the
fact that $\varphi \in \compl{\calf_1}$ it follows that there
exists some $\tuple{M_1,s_1} \in \calm_1$ such that $\tuple{K,s_K}
\in \concret{M_1}{s_1}{\calf_1}$ and $\tuple{M_1,s_1} \models_1^{t}
\varphi$. Because of thorough semantics, we then have $\forall
\tuple{K',s'} \in \concret{M_1}{s_1}{\calf_1}\,.\,\tuple{K',s'} \models
\varphi$.

Since ${\cal F}_1 \expre {\cal F}_2$, there is a model
$\tuple{M_2,s_2} \in \calm_2$ such that $\concret{M_1}{s_1}{\calf_1} =
\concret{M_2}{s_2}{\calf_2}$. Because of this equality $\tuple{M_1,s_1}$
and $\tuple{M_2,s_2}$ satisfy the same formulae under thorough semantics,
and hence $\tuple{M_2,s_2} \models_2^t \varphi$. Since $K$ was chosen
arbitrarily, we obtain $\varphi \in \compl{\calf_2}$.
\end{proof}

The necessity of a thorough semantics of both formalisms can be understood
from the following counterexample that shows that, in general, equal
expressiveness does not imply equal completeness. Take as $\calf_1$ the
standard class of $\mKMTS$ with inductive semantics and as $\calf_2$
also KMTSs, but this time with a thorough semantics. Since inductive
semantics does not preserve thorough semantics, these two formalisms
have different relative completeness. 

Consider the following example, taken from \cite{GJ:03}; the formula
$\boxop p \wedge \neg \boxop q$ is \emph{false} on all refinements of
the KMTS depicted below, but it is unknown on the KMTS itself.

\begin{center}
\vspace{-7pt}

\begin{tikzpicture}[->,>=stealth', auto, node distance=19pt]
 \tikzstyle{every state}=[text=black,inner sep=1pt, minimum size=3pt,fill=black]


\node[label=left:{\scriptsize $[p,q]$},state] (t1) {};
\node[label=right:{\scriptsize $[\neg p, \neg q]$},state] (t2) [right of=t1,xshift=20pt] {};
\node [above of=t1] {};
\node [below of=t1] {};

\draw[->]
(t1) edge[dashed] (t2)
;
\end{tikzpicture}
\vspace{-10pt}
\end{center}

\begin{theorem}
 In general, more completeness does not imply more expressiveness, even in formalisms with thorough semantics.
\end{theorem}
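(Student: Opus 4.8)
The plan is to exhibit two abstraction formalisms $\calf_1$ and $\calf_2$, both carrying a thorough semantics and sharing the concrete class $\calc = \mKS$, for which $\calf_1 \complpre \calf_2$ holds while $\calf_1 \expre \calf_2$ fails. The guiding observation is that the preceding proposition links the two measures only in the forward direction, and that for the ``natural'' pair the orderings are in fact aligned: both the expressiveness ordering (Theorem~\ref{thm:more_expr}) and, via that proposition, the completeness ordering favour $\mGTS$ over $\mMTS$, so this pair cannot separate the two notions. Any witness to the failure of the converse must therefore deliberately \emph{decouple} completeness from expressiveness, and — exactly as in the proof of the previous theorem, where a maximality constraint was imposed — this forces us to work with a non-standard refinement relation rather than with a different class of models.

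Concretely, I would keep the models and the logic fixed and vary only the refinement relation. Take $\calf_2$ to be the standard thorough formalism $\tuple{\mKS, \mMTS, \mixsim, \models^{t}}$, and take $\calf_1 = \tuple{\mKS, \mMTS, \mixsim_{\mathit{fin}}, \models^{t}}$, where $\spair{K}{s_K} \mixsim_{\mathit{fin}} \spair{M}{s_M}$ holds iff $K$ has a \emph{finite} carrier and $\spair{K}{s_K} \mixsim \spair{M}{s_M}$. Both formalisms are admissible in the sense of Definition~\ref{def:absform}: every consistent KMTS has a finite canonical Kripke implementation, so all concretisation sets are non-empty (hence the thorough semantics is consistent), and weak preservation is retained because $\mixsim_{\mathit{fin}}$ is a sub-relation of $\mixsim$.

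The first argument computes $\compl{\calf_1}$ and shows it is empty, which makes $\calf_1 \complpre \calf_2$ immediate. For any satisfiable $\varphi \in \lmu$, pick a model $\spair{K_0}{s_0} \models \varphi$ and adjoin to $K_0$ an infinite set of fresh unreachable states; the resulting $\spair{K_0'}{s_0}$ still satisfies $\varphi$ (satisfaction depends only on the reachable part) but has an infinite carrier, so it is $\mixsim_{\mathit{fin}}$-related to no abstract model, whence $\varphi \notin \compl{\calf_1}$. The second argument refutes $\calf_1 \expre \calf_2$, exploiting that the concretisation set of \emph{any} $\calf_2$-model is closed under adjoining unreachable states, whereas $\concret{M_1}{s_1}{\calf_1}$ contains only finite-carrier structures. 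Choose $\spair{M_1}{s_1}$ with non-empty finite concretisation set and suppose some $\spair{M_2}{s_2}$ met $\concret{M_1}{s_1}{\calf_1} = \concret{M_2}{s_2}{\calf_2}$; taking a finite $\spair{K}{s_K}$ in the common set and padding $K$ with infinitely many unreachable states yields $\spair{K'}{s_K} \in \concret{M_2}{s_2}{\calf_2}$ of infinite carrier, which cannot lie in $\concret{M_1}{s_1}{\calf_1}$ — a contradiction.

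The step I expect to require the most care is the \emph{admissibility} check for $\calf_1$, not either of the two main arguments: one must confirm that restricting $\mixsim$ to finite concretisations still leaves every concretisation set non-empty, so that $\models^{t}$ cannot simultaneously satisfy and refute a formula and the consistency assumption of Section~\ref{sec:preliminaries} survives, and that weak preservation is preserved. Once admissibility is settled, both the emptiness of $\compl{\calf_1}$ and the padding argument are routine; the genuinely conceptual point, which deserves emphasis in the write-up, is the motivating observation that — precisely because of the proposition — such a separation can never arise from formalisms that differ only in their class of models, and hence that a restricted refinement relation is essential.
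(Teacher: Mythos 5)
Your construction is correct, but it takes a genuinely different route from the paper. The paper keeps the refinement relation ($\mixsim$) and the semantics fixed and instead restricts the \emph{class of abstract models}: it takes $\calm_1 = \{M_1,M_2\}\subsetneq \calm_2=\{M_1,M_2,M_3\}$ for three concrete little KMTSs, computes that $\compl{\calf_1}=\compl{\calf_2}=\{\true,\diamop\true,\boxop\false\}$ up to semantic equivalence, and observes that $\calf_1$ is strictly less expressive because only $M_3$ has the class of \emph{all} Kripke structures as its concretisation set. You instead keep $\calm_1=\calm_2=\mKMTS$ and shrink the refinement relation to $\mixsim_{\mathit{fin}}$; the padding-with-unreachable-states argument then makes $\compl{\calf_1}$ empty while no $\calf_2$-concretisation set (being closed under such padding) can coincide with any $\calf_1$-concretisation set. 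Your admissibility check (non-empty concretisation sets via the canonical completion of a finite consistent KMTS, soundness inherited from $\mixsim$) is the right thing to worry about and goes through. What each approach buys: the paper's example is stronger in that it exhibits two formalisms with \emph{equal, non-trivial} completeness sets that are still expressively incomparable, whereas your separation is degenerate --- $\compl{\calf_1}=\emptyset$ makes $\calf_1\complpre\calf_2$ hold vacuously; on the other hand, your example has the virtue of using the full, natural model class rather than an ad hoc finite one.

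One claim in your write-up is wrong and should be removed: the assertion that ``such a separation can never arise from formalisms that differ only in their class of models, and hence that a restricted refinement relation is essential.'' The paper's own proof is a direct counterexample --- its two formalisms differ \emph{only} in their class of models and share $\mixsim$ and the semantics. The proposition you invoke only yields the forward implication $\calf_1\expre\calf_2 \implies \calf_1\complpre\calf_2$; it says nothing that would force the converse to hold for model-class restrictions, and indeed restricting the model class can destroy expressiveness (a concretisation class realised only by a discarded model becomes unrealisable) without changing the completeness set at all.
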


\begin{proof} Consider two formalisms, consisting of simple KMTSs
depicted below (we assume a setting with mixed simulation, and inductive
or thorough semantics).
\begin{center}
\vspace{-20pt}
\begin{tikzpicture}[->,>=stealth', auto, node distance=25pt]
 \tikzset{every loop/.style={min distance=10mm, looseness=10}}
 \tikzstyle{every state}=[text=black,inner sep=1pt, minimum size=3pt,fill=black]

\node[label=above:{\scriptsize $M_1$},state] (s1) {};


\node[label=above:{\scriptsize $M_2$},state] (s2) [right of=s1,xshift=80pt] {};
\node[state] (s21) [below of=s2] {};

\draw[->]
(s2)  edge (s21)
(s21)  edge[loop right, dashed] (s21)
;

\node[label=above:{\scriptsize $M_3$},state] (s3) [right of=s2,xshift=80pt] {};
\node [above of=s1] {};
\node [below of=s3] {};

\draw[->]
(s3)  edge[loop right, dashed] (s3)
;

\end{tikzpicture}
\vspace{-5pt}
\end{center} 
Let $\calm_1 = \{M_1, M_2 \}$ and $\calm_2 = \{M_1,M_2,M_3\}$. We have:
$\compl{\calf_1} = \compl{\calf_2} = \{\true, \diamop \true, \boxop
\false\}$ (up to semantic equivalence). However, $\calf_1$ is strictly
less expressive, because it cannot express the class of all Kripke
Structures with a single model (as is the case with $\calf_2$ and $M_3$).
\end{proof}

\subsection{Characterisation of completeness sets for GTSs}
\newcommand{\semcom}[1]{\ensuremath{[#1]_{\equiv}}}

\newcommand{\minmodel}[3]{\ensuremath{\textbf{minmodel}_{#3}(#1,#2)}}
\newcommand{\maxminmodel}[2]{\ensuremath{\textbf{maxminmodel}_{#2}(#1)}}

The GTS framework is known to be complete for the set of least fixpoint
free formulae, see~\cite{AGJ:04}. However, it is not known whether there
are other subsets of $\lmu$ for which GTSs are complete, too. In this
section, we attempt to provide a more accurate characterisation of
the completeness set of the GTS framework.

From hereon, we only consider GTSs or their subclasses with standard
inductive semantics of $\lmu$. By $\lmu^{\nu}$ we will denote the fragment
of $\mu$-calculus in which only the greatest fixpoint is allowed.

Let $\equiv$ denote the semantic equivalence of formulae, \ie $\varphi
\equiv \psi$ if for every GTS $M$, $M \models^{SIS} \varphi \,\iff\,
M \models^{SIS} \psi $. For any sublanguage ${\cal L}' \subseteq \lmu$,
we define the completion of ${\cal L}'$ with respect to semantic
equivalence as $\semcom{{\cal L}'} \eqdef \{\varphi \in \lmu \,\mid\,
\exists \varphi' \in {\cal L}'.\, \varphi \,\equiv\, \varphi' \}$.

\newcommand{\approxsyn}[2]{\ensuremath{\textbf{approx}(#1,#2)}}
\newcommand{\approxvar}[4]{\ensuremath{\textbf{approx-var}^{#2,#3}_{#4}(#1)}}
\newcommand{\approxsynname}{\textbf{approx }}
\newcommand{\approxvarname}{\textbf{approx-var }}

\newcommand{\unf}[2]{\ensuremath{\textbf{unfold}_{#2}(#1)}}

\def\muvars{{\vars}_{\mu}}
\def\nuvars{{\vars}_{\nu}}
\def\varord{\lhd}
\def\subf{\textbf{Sub}}

Before we prove our main result, we need to discuss certain technical
issues.  Inspired by the standard ``naive'' decision procedure for $\mu$-calculus, that allows one to compute the semantics of a formula using
approximants, we observe that, whenever a finite GTS satisfies a formula
$\varphi$, it is witnessed by a least-fixpoint free \emph{syntactic} approximant.
These syntactic approximants can be obtained by successive unfoldings of
$\mu$-variables of $\varphi$.

By $\vars(\varphi)$ we will denote the set of all variables occurring
in $\varphi$, and $\subf(\varphi)$ will denote all subformulae of
$\varphi$. We will call a formula $\varphi$ \emph{well-formed}, if for
every $X \in \vars(\varphi)$, there is at most one subformula $\sigma
X. \psi_X \in \subf(\varphi)$. For a well-formed formula $\varphi$,
the unfolding of a bound variable $X \in \vars(\varphi)$, denoted
by $\unf{X}{\varphi}$, is the formula $\psi_X$ such that $\sigma
X. \psi_X \in \subf(\varphi)$. The set of bound $\mu$-variables
(resp.\ $\nu$-variables) of a well-formed formula $\varphi$ is denoted
with $\muvars(\varphi)$ (resp.\ $\nuvars(\varphi)$).

We first provide an auxiliary notion, namely the approximant of a
formula with respect to \emph{one} fixed $\mu$-variable.

\begin{definition}	
Fix a closed formula $\psi \in \lmu$. Let $\varphi \in \subf(\psi)$ be
a subformula of $\psi$ and let $k \in \nat$ be a positive
natural number.  Let $\gamma$ be a word
over $\nat$; we will use it as a subscript in recursion variables to
ensure well-formedness by introducing unique names. We will use the
convention that the variables $X,Y$ range over the variables of the
original formula $\psi$; in the definition below, they are formally
identified with $X_{\epsilon}, Y_{\epsilon}$, where $\epsilon$ denotes
the empty word.  A \emph{simple $k$-th approximant} of $\varphi$ with
respect to $X \in \muvars$, denoted $\approxvar{\varphi}{X}{k}{\psi}$,
is defined as follows:
  \[
  \begin{array}{llll}
   \approxvar{X_{\gamma}}{X}{0}{\psi} &\,\eqdef\,& \false & \\
   \approxvar{\mu X_{\gamma}. \varphi}{X}{0}{\psi} &\,\eqdef\,& \false & \\   
   \approxvar{\varphi}{X}{k}{\psi}&\,\eqdef\,& \varphi & \varphi \in \{\true, \false\} \cup \lit\\
   \approxvar{\varphi_1 \oplus \varphi_2}{X}{k}{\psi}&\,\eqdef\,& 
   \approxvar{\varphi_1}{X}{k}{\psi}\oplus \approxvar{\varphi_2}{X}{k}{\psi} &  \oplus \in \{\wedge, \vee\}\\
   \approxvar{\star \,\varphi}{X}{k}{\psi}&\,\eqdef\,& \star \, \approxvar{\varphi}{X}{k}{\psi}&
   \star \in \{\diamop,\boxop\}\\
   \approxvar{\sigma Y_{\gamma}. \varphi}{X}{k}{\psi}&\,\eqdef\,& \sigma Y_{\gamma\,k}. \, \approxvar{\varphi}{X}{k}{\psi}
   &Y \neq X, \sigma \in \{\mu,\nu\}
   \\
   \approxvar{X_{\gamma}}{X}{k}{\psi}&\,\eqdef\,& \approxvar{\unf{X}{\psi}}{X}{k-1}{\psi} & \\
   \approxvar{Y_{\gamma}}{X}{k}{\psi}&\,\eqdef\,& Y_{\gamma\,k} &  Y \neq X\\
   
  \end{array}
  \] 
\end{definition}
Note that in the above definition, the original formula $\psi$ is kept
as a parameter of the \approxvarname operator, so that we are able to
retrieve the unfolding of the recursion variable $X$ in $\psi$.

\begin{proposition}
For any well-formed formula $\psi \in \lmu$, $\mu X. \varphi_X
\in \subf(\psi)$ and any $k \in \nat$, the $\mu$-calculus formula 
$\approxvar{\mu
X. \varphi_X}{X}{k}{\psi}$ is well-formed.

\end{proposition}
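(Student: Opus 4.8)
The plan is to prove a single stronger, generalised invariant by well-founded induction and then instantiate it at $\varphi = \mu X.\varphi_X$. Since the recursion defining $\approxvar{\cdot}{X}{k}{\psi}$ is not purely structural---the clause for a bound occurrence of $X$ replaces a leaf by the entire body $\unf{X}{\psi}$ while decrementing $k$---I would induct on the lexicographic measure $(k,|\varphi|)$, so that the structural clauses decrease $|\varphi|$ at a fixed level and the unfolding clause decreases $k$. Rather than attacking well-formedness head-on, I would establish, for every $\varphi \in \subf(\psi)$ and every $k \in \nat$, the invariant that in $\approxvar{\varphi}{X}{k}{\psi}$: (i) every bound variable has a name of the form $Y_{\gamma m}$, where $Y_{\gamma}$ is a variable bound in $\psi$ and $m \le k$ is a single level-letter appended to its original $\psi$-subscript $\gamma$; and (ii) whenever two binders in the output carry the same name $Y_{\gamma m}$, they bind \emph{syntactically identical} bodies. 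Well-formedness is then immediate from (ii): reading $\subf$ as a \emph{set} of subformulae, at most one distinct subformula of the form $\sigma Y_{\gamma m}.(\cdots)$ can occur, so every variable name is bound by at most one subformula.

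The heart of the argument is the bookkeeping on subscripts, and it rests on two facts. First, $\approxvar{\cdot}{X}{k}{\psi}$ is a deterministic function of its arguments, so two binders produced by processing the same source binder at the same level are not merely equally named but literally equal. Second, the well-formedness hypothesis on $\psi$ guarantees that a name $Y_{\gamma}$ is bound by a \emph{unique} fixpoint subformula $\sigma Y_{\gamma}.\chi$ of $\psi$. Combining these: any output binder named $Y_{\gamma m}$ can only be produced by the binder clause applied to this unique $\sigma Y_{\gamma}.\chi$ at level $m$ (the trailing letter $m$ pins down the level, since exactly one level-letter is appended per application, and the prefix $\gamma$ pins down the source binder), and that clause forces its body to be exactly $\approxvar{\chi}{X}{m}{\psi}$. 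Hence the body is a function of the name alone, which is precisely (ii). The complementary check---that two distinct source binders never collide---follows from word equality: $Y_{\gamma m} = Y_{\gamma' m'}$ forces $m = m'$ and $\gamma = \gamma'$, so the sources coincide, contradicting distinctness.

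The main obstacle is the unfolding clause. A single body may contain several occurrences of $X$, and unfolding each of them at the same level reinjects identical fresh copies of $\unf{X}{\psi}$, which in turn reintroduce the same binders with identical subscripts; a naive occurrence-counting reading of $\subf$ would then (wrongly) declare the result ill-formed. The proof must therefore make explicit that these duplicates are genuinely identical subformulae, so that they collapse to a single element of $\subf$---and this is exactly where determinism of the approximant and the set-theoretic reading of $\subf$ do the work. A secondary, purely technical point is the asymmetric treatment of the outermost binder $\mu X.\varphi_X$: at level $0$ it collapses to $\false$, while for $k>0$ it is consumed by one unfolding (processing $\varphi_X$ at level $k-1$), so in either case no $X$-binder survives in the output and every remaining binder satisfies $Y \neq X$, making the invariant apply verbatim. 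Instantiating the invariant at $\varphi = \mu X.\varphi_X$ then yields the proposition.
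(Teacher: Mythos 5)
Your proof is correct, and at its core it rests on the same idea as the paper's: an induction in which the appended level-index is what prevents binders reintroduced by unfoldings from clashing with binders already present. The paper's proof is a bare induction on $k$ whose inductive step simply asserts that ``the indexing scheme guarantees that recursion variables defined in the lower approximants have different names''; your version differs in two worthwhile respects. First, you replace the induction on $k$ by a well-founded induction on the lexicographic pair $(k,|\varphi|)$ and prove a strengthened invariant for all subformulae of $\psi$ --- every output binder is some $Y_{\gamma m}$ with exactly one level-letter appended, and the name determines the body --- from which well-formedness is immediate; this is cleaner because the recursion defining $\approxvar{\cdot}{X}{k}{\psi}$ is not structural (the clause for $X_{\gamma}$ substitutes the whole body $\unf{X}{\psi}$), so the paper's single induction on $k$ silently hides an inner structural induction. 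Second, and more substantively, you isolate the one point the paper glosses over: when $\varphi_X$ contains several occurrences of $X$, each is unfolded at the \emph{same} level, so the supposedly fresh names recur across the copies, and freshness across levels alone does not yield well-formedness. Your resolution --- determinism of the operator makes the duplicated copies syntactically identical, and $\subf$ is a set of subformulae, so identically named binders with identical bodies do not violate the definition --- is exactly what is needed to make the paper's claim precise. (Your reading of the missing clause for $\mu X_{\gamma}.\varphi$ at $k>0$, namely that the outermost binder is consumed by one unfolding so that no $X$-binder survives, is also the intended one.) In short: the same underlying mechanism, but your strengthened invariant closes a genuine gap left implicit in the published argument.
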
 

\begin{proof} We use induction on $k$.
\begin{compactitem}
\item Base case: trivial.
\item Induction. Observe that $\approxvar{\mu X. \varphi_X}{X}{k+1}{\psi}$
preserves the structure of $\varphi_X$ apart from the occurrences of
$X$, which are unfolded with the ``lower'' approximants. Since $\mu
X. \varphi_X$ is well-formed, the only place where duplicate definition
of recursion variables could occur are the unfoldings. However, the
indexing scheme guarantees that recursion variables defined in the lower
approximants have different names. Together with the inductive
assumption, this yields well-formedness.
\end{compactitem}
\end{proof}
We are now in a position to define the general syntactic approximants of
a $\mu$-calculus formula, which are constructed by unfolding every $\mu$
variable a finite number of times (specified for each variable by a
function $\alpha$), so that the resulting formula is least-fixpoint free.

\begin{definition}
For a given formula $\varphi \in \lmu$ and $\alpha : \muvars(\varphi) \longrightarrow \nat$ we define an approximant of $\varphi$, denoted with $\approxsyn{\varphi}{\alpha}$ as:
  \[
   \approxsyn{\varphi }{\alpha} \,\eqdef\,
   \left \{
   \begin{array}{ll}
   \approxsyn{\approxvar{\varphi}{X}{\alpha(X)}{\varphi}}{\alpha} &
   \text{if $X_{\gamma} \in \muvars(\varphi)$ for some $\gamma \in \nat^{*}$ and}\\
   & \text{$\neg \exists \, Y_{\gamma'} \in \muvars(\varphi).\,$}
    \text{$\mu X_{\gamma} . \varphi ' \in \subf(\unf{Y_{\gamma'}}{\varphi})$}\\
   \varphi & \text{if $\muvars(\varphi) = \emptyset$ }   
   \end{array}
   \right.
  \]  
 \end{definition}

Intuitively, as long as there are $\mu$-variables in the formula, 
we take one (say $X$) such that for some $\gamma \in \nat^{*}$,  $X_{\gamma}$ has the outermost occurrence and apply the \approxvarname operator with $X$ as a parameter, by which we obtain a formula that does not contain any variable of the form $X_{\gamma'}$ for any $\gamma' \in \nat^{*}$.

 \begin{proposition} For any well-formed formula $\psi \in
 \lmu$ and any $\alpha : \muvars(\varphi) \longrightarrow \nat$,
 $\approxsyn{\psi}{\alpha}$ is a well-formed and least fixpoint-free
 $\mu$-calculus formula.
\end{proposition}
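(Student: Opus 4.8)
The plan is to prove both properties simultaneously by induction on the number of distinct base $\mu$-variable names of $\psi$, exploiting the fact that every recursive step of \textbf{approx} eliminates exactly one such name. Termination of the defining recursion is thereby established along the way — which is necessary, since the statement already presupposes that $\approxsyn{\psi}{\alpha}$ is defined at all.

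First I would isolate the key \emph{elimination lemma} for the single-variable approximant: if $\varphi$ is well-formed and $X$ is an outermost bound $\mu$-variable of $\varphi$ (in the sense of the side condition in the definition of \textbf{approx}), then $\approxvar{\varphi}{X}{k}{\varphi}$ (i) contains no variable $X_{\gamma'}$ and no binder $\mu X_{\gamma'}$ for any $\gamma'$, (ii) has every bound $\mu$-variable it contains of a base name already occurring in $\varphi$, so that the base-name count of $\muvars$ strictly decreases, and (iii) is well-formed. Claims (i) and (ii) follow by structural induction on $\varphi$ with a nested induction on $k$: every occurrence of $X_\gamma$ is either unfolded into a copy of $\unf{X}{\varphi}$ carried at the lower index $k-1$ or, at $k=0$, rewritten to $\false$, so that after finitely many unfoldings no $X$-occurrence survives; since $\unf{X}{\varphi}$ is a subformula of $\varphi$, any $\mu$-binder in the introduced copies has a base name already in $\varphi$, and the outermost condition guarantees that the base name $X$ occurs in no other variable's scope, hence is removed altogether. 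Claim (iii) adapts the preceding Proposition: outside the subtree $\mu X_\gamma.\,\psi_X$ the traversal merely relabels binders $Y_\gamma \mapsto Y_{\gamma k}$, preserving their pairwise distinctness (from well-formedness of $\varphi$), while inside it the preceding Proposition applies, the deeper index words keeping the binders of distinct unfolding copies disjoint from one another and from the top-level ones.

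Next I would run the outer induction. If $\muvars(\psi) = \emptyset$, then $\approxsyn{\psi}{\alpha} = \psi$, which is well-formed by hypothesis and least-fixpoint-free (no bound $\mu$-variable means no $\mu$-binder). Otherwise some outermost $X$ is selected and $\approxsyn{\psi}{\alpha} = \approxsyn{\psi'}{\alpha}$ with $\psi' = \approxvar{\psi}{X}{\alpha(X)}{\psi}$. By the elimination lemma $\psi'$ is well-formed and its set of base $\mu$-variable names is strictly contained in that of $\psi$; reading $\alpha$ on $\psi'$ through the base-name convention ($\alpha(Y_{\gamma k}) := \alpha(Y)$), the induction hypothesis yields that $\approxsyn{\psi'}{\alpha}$ is well-formed and least-fixpoint-free, and the same then holds for $\approxsyn{\psi}{\alpha}$. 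As the measure strictly decreases, the recursion bottoms out after at most $|\muvars(\psi)|$ steps.

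I expect the main obstacle to be the bookkeeping in the elimination lemma: verifying that the subscript-renaming scheme really keeps the binders of the different unfolding copies disjoint while leaving the untouched structure intact, so that well-formedness is genuinely preserved, and simultaneously confirming that eliminating $X$ can neither reintroduce a binder of base name $X$ nor create a fresh base name. The \emph{outermost} side condition is exactly what makes this clean: it guarantees that $\unf{X}{\psi}$ is the unique top-level definition of $X$ and that $X$ does not hide inside the scope of another $\mu$-variable, so that one application of \textbf{approx-var} removes the base name $X$ once and for all and the termination measure is sound.
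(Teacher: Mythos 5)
The paper states this proposition without any proof, so there is nothing to compare your argument against line by line; judged on its own, your proof is correct and is almost certainly the argument the authors intended. Your decomposition --- an elimination lemma for a single application of \approxvarname on an outermost $\mu$-variable (no surviving $X_{\gamma'}$ occurrences or binders, no fresh base names, well-formedness preserved via the preceding Proposition), wrapped in an outer induction on the number of distinct base names in $\muvars(\psi)$ --- is the natural one, and it has the added merit of making explicit two things the paper silently assumes: that the recursion defining $\approxsyn{\cdot}{\alpha}$ terminates at all, and that $\alpha$ must be read through base names once subscripts start accumulating. Your observation that the outermost side condition is what makes the base name $X$ disappear in a single step (because $\mu X_\gamma.\varphi'$ cannot be re-copied by unfolding some other $\mu$-variable) is exactly the right justification for the termination measure.

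One caveat, which is a defect of the paper's definition rather than of your proof: the subscripting scheme renames a binder $\sigma Y_\gamma$ reached at index $k$ to $\sigma Y_{\gamma\,k}$, but an occurrence of $Y$ sitting inside a copy of $\unf{X}{\psi}$ is reached at a \emph{lower} index and becomes $Y_{\gamma\,k'}$ with $k' < k$; for example $\nu Y.\mu X.(Y \vee \lozenge X)$ yields $\nu Y_2.(Y_2 \vee \lozenge(Y_1 \vee \lozenge(Y_0 \vee \lozenge\false)))$. So your phrase ``leaving the untouched structure intact'' is too optimistic: the approximant need not be closed. This does not endanger the present Proposition --- well-formedness as defined only requires at most one binder per variable name, and least-fixpoint-freeness is unaffected --- but it does matter for the subsequent Proposition~\ref{prop:witness}, and it is worth separating the two claims so that your proof of this one does not appear to rest on closedness.
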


 \begin{proposition} \label{prop:witness}
  For any GTS $M$, $M \models \varphi \,\iff\, M \models
  \approxsyn{\varphi}{\alpha}$, where $\alpha(X) = |M|$ for every $X
  \in \muvars(\varphi)$.
 \end{proposition}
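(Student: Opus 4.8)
The plan is to prove the biconditional of Proposition~\ref{prop:witness} by exploiting the standard characterisation of least-fixpoint semantics via approximants, specialised to the fact that $M$ is a \emph{finite} GTS. Recall that in the SIS, $\semtrue{\mu X.\varphi}{} = \mu U.\, \semtrue{\varphi}{[X:=U]}$, and that on a structure with $|M|$ states this least fixpoint is reached within $|M|$ iterations of the monotone operator $U \mapsto \semtrue{\varphi}{[X:=U]}$ starting from $\emptyset$. The key semantic fact I would isolate first is therefore: for a single $\mu$-variable $X$ with body $\varphi_X$, the $|M|$-th finite approximant $\mu^{|M|}$ of the fixpoint coincides with the full fixpoint on $M$, i.e.\ iterating the operator $|M|$ times from $\emptyset$ already stabilises. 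The syntactic operator $\approxvarname$ is precisely designed to realise these semantic unfoldings at the level of formulae: unfolding $X_\gamma$ down to approximant $0$ yields $\false$ (matching the base $\emptyset$ of the iteration), and each decrement of $k$ corresponds to one application of the fixpoint operator.

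First I would state and prove a \emph{single-variable lemma}: for any GTS $M$, any well-formed $\psi$, any subformula $\mu X.\varphi_X \in \subf(\psi)$, and $k \geq |M|$, we have $\semtrue{\mu X.\varphi_X}{\eta} = \semtrue{\approxvar{\mu X.\varphi_X}{X}{k}{\psi}}{\eta}$ for every environment $\eta$ (and symmetrically for $\semfalse{\cdot}{}$, which is needed because the dual operator $\semfalse{\mu X.\varphi}{} = \nu U.\dots$ appears when tracking refutation). The proof is an induction on $k$, using that $\approxvar{X_\gamma}{X}{k}{\psi} = \approxvar{\unf{X}{\psi}}{X}{k-1}{\psi}$ mirrors one unfolding step, that $\approxvar{\cdot}{X}{0}{\psi}$ replaces $X$ with $\false$ (the empty-set base), and crucially that $|M|$ iterations suffice to reach the least fixpoint on a finite structure. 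Since the $\approxvarname$ operator leaves all \emph{other} connectives, modalities, and fixpoints structurally intact (merely relabelling variables to preserve well-formedness, which does not change semantics), the inductive step passes through $\wedge,\vee,\boxop,\diamop$ and the untouched $\nu$- and other $\mu$-binders routinely.

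Next I would lift this to the full operator $\approxsynname$ by induction on the number of $\mu$-variables in $\varphi$. The definition of $\approxsyn{\varphi}{\alpha}$ picks an \emph{outermost} $\mu$-variable $X$ (one not lying inside the unfolding of any other $\mu$-variable), applies $\approxvarname$ with parameter $X$ and budget $\alpha(X) = |M|$, and recurses on the result, which contains strictly fewer $\mu$-variables of the form $X_{\gamma'}$. By the single-variable lemma the $\approxvarname$ step preserves $\semtrue{\cdot}{}$ on $M$, and by the induction hypothesis the recursive call does too; composing these equalities gives $\semtrue{\varphi}{} = \semtrue{\approxsyn{\varphi}{\alpha}}{}$ on $M$, whence $M \models \varphi \iff M \models \approxsyn{\varphi}{\alpha}$. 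Termination and well-formedness of the recursion are already guaranteed by the preceding propositions.

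The main obstacle I anticipate is the \emph{outermost-variable} bookkeeping in the $\approxsynname$ induction: I must verify that unfolding the chosen outermost $X$ to a fixed depth does not regenerate or interact badly with the remaining $\mu$-variables, and in particular that the copies of inner $\mu$-subformulae produced by the unfoldings (with freshly indexed names $Y_{\gamma k}$) are correctly handled by the later recursive calls so that all of them are eventually eliminated. A subtle point worth making explicit is that a single global budget $\alpha(X)=|M|$ is sound even though nested $\mu$-fixpoints are mutually dependent: because $X$ is outermost, its body is re-evaluated under whatever environment the surrounding (already-unfolded or $\nu$) context supplies, and $|M|$ iterations suffice to saturate the $X$-fixpoint \emph{pointwise} regardless of that environment, since the bound $|M|$ depends only on the size of the state space and not on the environment. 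I would phrase the single-variable lemma for \emph{all} environments $\eta$ precisely to make this environment-independence explicit and thereby neutralise the nesting difficulty.
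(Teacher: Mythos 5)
The paper states this proposition without proof, so there is no authorial argument to compare yours against; your proposal supplies the standard justification --- the syntactic unfoldings realise the semantic approximants of each least fixpoint, and on a finite GTS the iteration from $\emptyset$ stabilises within $|M|$ steps uniformly in the environment, which is exactly what makes the outermost-variable recursion of $\approxsynname$ go through --- and it is correct. The only superfluous element is the $\semfalse{\cdot}{}$ half of your single-variable lemma: $M \models \varphi$ is defined via $\semtrue{\cdot}{}$ alone, and since the two semantic maps are given by independent structural recursions on positive-form formulae, the refutation case is not needed for the statement as given (though it would be needed if one also wanted preservation of $\not\models$).
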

From the existence of least-fixpoint free approximants we can deduce the
following fact: for every formula containing a ``true'' least fixpoint
property $\varphi \in \lmu \setminus \semcom{\lmu^{\nu}}$, and for an
arbitrarily large number $n \in \nat$, we can always find a concrete
structure, which needs a GTS of size at least $n$ to prove $\varphi$ using
a GTS abstraction. We will use this fact to prove that, assuming that our
concrete structures are the most general class $\mKS$, the completeness
set of GTS is exactly the set of least-fixpoint free formulae. First,
we introduce some auxiliary notation, needed to facilitate proving this
result.

\begin{definition}
Let $\calf = \absform$ be an abstraction formalism, and $\varphi \in
\lmu$ a formula. For an arbitrary Kripke Structure $K \in \mKS$ 
we define 
\[ \begin{array}{lll}
 \minmodel{\varphi}{K}{\calf} & \,\eqdef\, & \left \{ \begin{array}{ll}
  inf \, \{|M| \,\mid\, M \in \calm \wedge K \metaref M \models^{\alpha}
  \varphi  \} & \text{ if } \varphi \in \compl{\calf}\\
 \infty & \text{ otherwise } \end{array} \right.  \\
  \maxminmodel{\varphi}{\calf} & \,\eqdef\,&  sup\,
  \{\minmodel{\varphi}{K}{\calf} \,\mid\, K \in \mKS \wedge K \models
  \varphi\} \end{array}
 \]

\end{definition}

\begin{lemma}
\label{lem:maxmin}
 For any $\varphi \in \lmu \setminus \semcom{\lmu^{\nu}}$, we have $\maxminmodel{\varphi}{\mGTS} = \infty$.
\end{lemma}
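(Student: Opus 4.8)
The plan is to prove the contrapositive: assuming $\maxminmodel{\varphi}{\mGTS} = N < \infty$ for some $N \in \nat$, I will show $\varphi \in \semcom{\lmu^{\nu}}$. Write $\Phi_k \eqdef \approxsyn{\varphi}{\alpha_k}$ for the uniform approximant obtained by setting $\alpha_k(X) = k$ for every $X \in \muvars(\varphi)$. By the preceding propositions each $\Phi_k$ is well-formed and least-fixpoint-free, hence $\Phi_k \in \lmu^{\nu}$. Two routine monotonicity facts should be recorded first: on every GTS the tt-semantics of the approximants is non-decreasing and bounded by $\varphi$, so $\Phi_k \implies \Phi_{k+1} \implies \varphi$; and, combining this with Proposition~\ref{prop:witness} (which gives $M \models^{SIS} \varphi \iff M \models^{SIS} \Phi_{|M|}$), one obtains the clean equivalence $M \models^{SIS} \varphi \iff M \models^{SIS} \Phi_k$ for every GTS $M$ with $|M| \le k$.

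With these in hand, the core of the contrapositive is short. Since $\maxminmodel{\varphi}{\mGTS}$ is finite, $\varphi$ must lie in $\compl{\mGTS}$ (otherwise every $\minmodel{\varphi}{K}{\mGTS}$ would be $\infty$), and the supremum being $N$ means that for every Kripke structure $K$ with $\tuple{K,s_K} \models \varphi$ there is a GTS $M$ with $|M| \le N$, $\tuple{K,s_K} \mixsim \tuple{M,s_M}$ and $\tuple{M,s_M} \models^{SIS} \varphi$. By the equivalence above $\tuple{M,s_M} \models^{SIS} \Phi_N$, and since $\Phi_N \in \lmu$, the soundness (weak preservation) of mixed simulation yields $\tuple{K,s_K} \models \Phi_N$. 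As $\Phi_N \implies \varphi$ holds unconditionally, this proves that $\varphi$ and $\Phi_N$ agree on every Kripke structure. To conclude $\varphi \in \semcom{\lmu^{\nu}}$ I must upgrade this to the semantic equivalence $\equiv$ of the paper, which quantifies over all GTSs: taking an arbitrary GTS $M'$ with $\tuple{M',s'} \models^{SIS} \varphi$, soundness makes every concretisation of $\tuple{M',s'}$ satisfy $\varphi$, hence $\Phi_N$; if $\models^{SIS}$ is thorough on $\Phi_N$, this forces $\tuple{M',s'} \models^{SIS} \Phi_N$, giving $\varphi \equiv \Phi_N$ and the desired contradiction.

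The step I expect to be the main obstacle is exactly this last upgrade from agreement on Kripke structures to agreement on all GTSs, i.e.\ the claim that standard inductive semantics is \emph{thorough} on the least-fixpoint-free fragment: for $\psi \in \lmu^{\nu}$, $\tuple{M,s} \models^{SIS} \psi$ whenever every concretisation of $\tuple{M,s}$ satisfies $\psi$. This is essentially the content behind the completeness of GTSs for least-fixpoint-free formulae~\cite{AGJ:04}, and it seems unavoidable here: Proposition~\ref{prop:witness} only ties the witnessing approximant index to $|M'|$, so for abstract models larger than $N$ it cannot by itself bound the required unfolding depth by $N$, and a formula lying in the Kripke-structure closure of $\lmu^{\nu}$ but not in its GTS closure would otherwise slip through. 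The same obstacle reappears if one argues directly instead: the construction suggested before the lemma amounts to exhibiting, for each $n$, a Kripke structure $K_n$ with $K_n \models \varphi$ and $K_n \not\models \Phi_n$, so that by soundness no abstraction of size $\le n$ can prove $\Phi_n$ and hence none can prove $\varphi$, forcing $\minmodel{\varphi}{K_n}{\mGTS} > n$ and $\maxminmodel{\varphi}{\mGTS} = \infty$; producing such a \emph{concrete} witness from the mere hypothesis $\varphi \not\equiv \Phi_n$ over GTSs again requires transferring the strict approximant gap from GTSs down to Kripke structures, which is precisely the thoroughness property above.
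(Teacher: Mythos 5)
Your main line of argument is the paper's own, down to the details: the contradiction/contrapositive set-up, the least-fixpoint-free syntactic approximant $\Phi_N = \approxsyn{\varphi}{\alpha_N}$, Prop.~\ref{prop:witness} to get $M \models^{SIS} \Phi_{|M|}$ from $M \models^{SIS} \varphi$, soundness of $\mixsim$ to push this down to the concrete $K$, and the reverse implication from the fact that approximants underapproximate the least fixpoint. You are in fact more careful than the paper on one point: the paper works with $\approxsyn{\varphi}{\alpha_{|M|}}$, where $M$ (and hence $|M|$) varies with $K$, and your monotonicity observation $\Phi_k \implies \Phi_{k+1}$ is exactly what is needed to replace this family by the single formula $\Phi_N$. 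Up to and including the conclusion that $\varphi$ and $\Phi_N$ agree on every Kripke structure, your proof and the paper's coincide.

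The divergence is your final step, and there your proposed repair is wrong. Note first that the paper does not attempt the upgrade you describe: it concludes $\varphi \equiv \approxsyn{\varphi}{\alpha_{|M|}}$ directly from agreement on all Kripke structures, i.e.\ it implicitly reads $\equiv$ as equivalence over concrete structures (which is also all that the $\subseteq$-direction of Thm.~\ref{thm:compl_gts_infi} consumes), even though the displayed definition of $\equiv$ quantifies over all GTSs; you are right to flag this mismatch. However, the bridge you propose --- that SIS is thorough on $\lmu^{\nu}$, being ``essentially the content behind'' completeness for $\lmu^{\nu}$ --- is false, and is not what completeness says. Completeness is existential: for every concrete $\tuple{K,s_K} \models \psi$ there is \emph{some} finite abstraction proving $\psi$. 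Thoroughness would be universal: \emph{every} abstract state all of whose concretisations satisfy $\psi$ would itself satisfy $\psi$ under SIS. Already the fixpoint-free formula $p \vee \neg p$ refutes the latter: for a one-state KMTS $M$ whose state $s$ is unlabelled and has no transitions, every concretisation of $\tuple{M,s}$ satisfies $p \vee \neg p$, yet $\semtrue{p \vee \neg p}{} = \semtrue{p}{} \cup \semtrue{\neg p}{} = \emptyset$ on $M$. So if you insist on the GTS-level $\equiv$ of the paper's definition, this route is a dead end; the argument should instead be closed the way the paper closes it, by taking $\equiv$ (and hence $\semcom{\lmu^{\nu}}$) to be determined by agreement on Kripke structures.
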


\begin{proof}
Suppose, towards a contradiction, that for some $\varphi \in \lmu \setminus
\semcom{\lmu^{\nu}}$, $\maxminmodel{\varphi}{\mGTS} = n$ for
some $n \in \nat$. Then for every Kripke Structure $K$ satisfying
$\varphi$ there is a GTS $M$ with a size at most $n$ such that $K
\mixsim M \models^{SIS} \varphi$. From Prop.~\ref{prop:witness}
we know that $M \models^{SIS} \varphi$ implies that $M \models
\approxsyn{\varphi}{\alpha_{|M|}}$ (where $\alpha_{|M|}(X)
= |M|$ for all $X \in \muvars(\varphi)$), and hence $K \models
\approxsyn{\varphi}{\alpha_{|M|}}$. This is because of the fact that
mixed simulation preserves
properties of abstract models to concrete, \ie the soundness. From the
above observations we obtain that for every $K \in \mKS$, we have $K
\models \varphi \iff K \models \approxsyn{\varphi}{\alpha_{|M|}}$,
so $\varphi \,\equiv\, \approxsyn{\varphi}{\alpha_{|M|}}$. But
$\approxsyn{\varphi}{\alpha_{|M|}} \in \lmu^{\nu}$, hence $\varphi \in
\semcom{\lmu^{\nu}}$, a contradiction.
\end{proof}
We are now ready to give the exact characterisation of the completeness
set of GTS, in case concrete structures are the most general class $\mKS$.

\begin{theorem}
\label{thm:compl_gts_infi}
For the general class $\mKS$, where an arbitrary, possibly infinite number of initial states is allowed, the class of formulae for which GTSs are complete, is, up to semantic equivalence, exactly $\lmu^{\nu}$, i.e. $\compl{\langle \mKS, \mGTS, \mixsim, \models^{SIS} \rangle} = \semcom{\lmu^{\nu}}$.
\end{theorem}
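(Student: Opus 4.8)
The plan is to prove the two inclusions $\semcom{\lmu^{\nu}} \subseteq \compl{\langle \mKS, \mGTS, \mixsim, \models^{SIS} \rangle}$ and $\compl{\langle \mKS, \mGTS, \mixsim, \models^{SIS} \rangle} \subseteq \semcom{\lmu^{\nu}}$ separately, since set equality splits naturally this way. The harder direction is the completeness claim ($\supseteq$): I must show that every least-fixpoint-free formula (up to $\equiv$) is in the completeness set, which is the content of the known result from \cite{AGJ:04}. The easier direction ($\subseteq$) is essentially immediate from Lemma~\ref{lem:maxmin}, which I expect to carry almost all the weight.

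Let me sketch the $\subseteq$ inclusion first, since it is where the preceding machinery pays off. Take any $\varphi \in \compl{\langle \mKS, \mGTS, \mixsim, \models^{SIS} \rangle}$. I want to conclude $\varphi \in \semcom{\lmu^{\nu}}$. Suppose not, i.e.\ $\varphi \in \lmu \setminus \semcom{\lmu^{\nu}}$. Then Lemma~\ref{lem:maxmin} gives $\maxminmodel{\varphi}{\mGTS} = \infty$. Unfolding the definition of $\maxminmodel{\cdot}{\cdot}$, this says there is no finite uniform bound $n$ on the size of the GTS needed to witness $\varphi$ across all satisfying Kripke structures $K$; concretely, for every $n$ there is a $K \models \varphi$ with $\minmodel{\varphi}{K}{\mGTS} > n$. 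But membership of $\varphi$ in the completeness set only guarantees, for each such $K$, the \emph{existence} of some finite GTS $M$ with $K \mixsim M \models^{SIS} \varphi$; it does not by itself contradict unboundedness. So the step I must be careful about is the following: I should check whether $\varphi \in \compl{\mGTS}$ is even compatible with $\maxminmodel{\varphi}{\mGTS} = \infty$. Reading the definition of $\minmodel{\cdot}{\cdot}{\cdot}$, if $\varphi \in \compl{\calf}$ then $\minmodel{\varphi}{K}{\calf}$ is the infimum of genuine finite model sizes (not $\infty$) for every satisfying $K$, but the \emph{supremum} over all $K$ can still be $\infty$. Hence the contrapositive of Lemma~\ref{lem:maxmin} alone is not quite enough; I should instead read Lemma~\ref{lem:maxmin} as asserting unboundedness, and then argue separately that for a formula in the completeness set we obtain a contradiction. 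The clean route is: the statement of Theorem~\ref{thm:compl_gts_infi} permits \emph{infinitely many initial states}. I will exploit this: given $\varphi \in \lmu \setminus \semcom{\lmu^{\nu}}$, take the Kripke structure $K^{*}$ whose initial states are (a representative from each of) all the satisfying structures $K_n$ with $\minmodel{\varphi}{K_n}{\mGTS} > n$, assembled as a disjoint union. Any single finite GTS $M$ with $K^{*} \mixsim M$ would, by the definition of $\mixsim$ on models (matching each initial state), yield a single finite $M$ simultaneously witnessing all the $K_n$, bounding their $\minmodel$ values by $|M|$ — contradicting unboundedness. This is the real crux, and the step I expect to be the main obstacle: verifying that $\mixsim$ between the disjoint-union $K^{*}$ and a finite $M$ forces a uniform bound, and that $M \models^{SIS} \varphi$ transfers correctly to each component.

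For the $\supseteq$ inclusion, I take $\varphi \in \semcom{\lmu^{\nu}}$, so $\varphi \equiv \psi$ for some $\psi \in \lmu^{\nu}$, and I must show $\varphi$ (equivalently $\psi$, since $\equiv$ respects both $\models^{SIS}$ and $\models$ on all GTSs and Kripke structures) lies in the completeness set. Concretely, for an arbitrary $\tuple{K,s_K}$ with $\tuple{K,s_K} \models \psi$ I must produce a \emph{finite} GTS $\tuple{M,s_M}$ with $\tuple{K,s_K} \mixsim \tuple{M,s_M}$ and $\tuple{M,s_M} \models^{SIS} \psi$. This is exactly the completeness-for-$\lmu^{\nu}$ result established in \cite{AGJ:04}, so I would invoke it directly rather than reconstructing the construction; if a self-contained argument is wanted, the standard approach builds $M$ as a finite quotient of $K$ under a behavioural equivalence induced by the finitely many $\nu$-subformulae of $\psi$, using the absence of least fixpoints to guarantee that the greatest-fixpoint approximants stabilise in finitely many steps and that soundness of $\mixsim$ lifts satisfaction back. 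The main subtlety here is ensuring the quotient is a valid GTS (must-hypertransitions matched by may transitions, labelling consistency) and that $\psi \equiv \varphi$ lets me conclude $\varphi \in \compl{\mGTS}$, which follows since $\compl{\mGTS}$ is closed under $\equiv$ by construction of $\semcom{\cdot}$. Combining both inclusions yields the claimed equality.
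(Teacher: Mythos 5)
Your proposal is correct and follows essentially the same route as the paper: the $\supseteq$ direction is delegated to the known completeness of GTSs for $\lmu^{\nu}$ from \cite{AGJ:04}, and the $\subseteq$ direction combines Lemma~\ref{lem:maxmin} with exactly the paper's key construction, namely assembling the unboundedly-hard witnesses $K_n$ into a single Kripke structure with infinitely many initial states so that one finite GTS abstracting it would uniformly bound all the $\minmodel{\varphi}{K_n}{\mGTS}$ values. Your explicit observation that $\maxminmodel{\varphi}{\mGTS}=\infty$ alone does not contradict membership in the completeness set, and that the infinite-initial-state union is what forces the contradiction, is precisely the point the paper's proof relies on.
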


\begin{proof}
 The fact that $\lmu^{\nu} \subseteq \compl{\langle \mKS, \mGTS, \mixsim,
 \models^{SIS} \rangle}$ follows from the known result that GTSs are
 complete for $\lmu^{\nu}$ \cite{AGJ:04}. To prove the inclusion in the other direction, we
 proceed by contradiction. Assume that $\varphi \in (\lmu \setminus
 \semcom{\lmu^{\nu}}) \cap \compl{\langle \mKS, \mGTS, \mixsim,
 \models^{SIS} \rangle}$.

 Since $\varphi$ is satisfiable, there is a KS $K_1$ such that $K_1
 \models \varphi$, and from $\varphi \in \compl{\langle \mKS, \mGTS,
 \mixsim, \models^{SIS} \rangle}$ there exists a model $M_1 \in \mGTS$
 of size $n_1$ such that $K_1 \mixsim M_1 \models^{SIS} \varphi$. Assume
 that $M_1$ is the smallest GTS with this property. Because $\varphi \in
 (\lmu \setminus \semcom{\lmu^{\nu}})$, we know from Lem.~\ref{lem:maxmin}
 that $\maxminmodel{\varphi}{\mGTS} = \infty$, so there is some KS $K_2$,
 for which the corresponding smallest GTS $M_2$ proving $\varphi$ has size
 $n_2 > n_1$. We can continue this construction ad infinitum, obtaining
 a sequence ($K_i$,$M_i$,$n_i$), such that  $K_i \models \varphi$, $K_i
 \mixsim M_i \models^{SIS} \varphi$, $M_i$ is the smallest GTS proving
 $\varphi$ on $K_i$ and $n_i = |M_i|$.

 Let us now define $K = \bigcup_{i=1}^{\infty} K_i$ with $S^0 =
 \bigcup_{i=1}^{\infty} S^0_i$. Since for all $K_i \models \varphi$, we
 have for all $i \in \nat$, $s \in S^0_i$ $s \models \varphi$, therefore
 $K \models \varphi$. Since GTSs are complete for $\varphi$, there is
 some $M \in \mGTS$ such that $K \mixsim M \models^{SIS} \varphi$. But
 this means that for all $i \in \nat$ $\minmodel{\varphi}{K_i}{\mGTS}
 \leq |M|$, a contradiction.
\end{proof}
Note that in the proof of the above theorem, it is essential that one
is allowed to have an infinite set of initial states. If we restrict to
Kripke Structures with only finitely many initial states, then we obtain
a richer completeness set. We can consider two subcases, depending on
whether the concrete structures are finitely branching or not. While
we believe that it is difficult to provide an exact characterisation
of completeness sets in these cases, we can at least prove that the
completeness hierarchy is strict.

By $\mKS_{fi}$ we denote the subclass of $\mKS$ with finitely many
initial states; by $\mKS_{fb}$ we denote the subclass of $\mKS_{fi}$,
consisting of finitely branching structures.

\begin{theorem} We have both
$\compl{\langle \mKS, \mGTS, \mixsim, \models^{SIS} \rangle} \subset
\compl{\langle \mKS_{fi}, \mGTS, \mixsim, \models^{SIS} \rangle}$, and

\noindent
$\compl{\langle \mKS_{fi}, \mGTS, \mixsim, \models^{SIS} \rangle} \subset
\compl{\langle \mKS_{fb}, \mGTS, \mixsim, \models^{SIS} \rangle}$.
\end{theorem}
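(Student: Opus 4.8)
The plan is to split each of the two strict inclusions into an easy ``$\subseteq$'' part obtained by monotonicity, and a strictness part obtained by exhibiting a separating formula. For the inclusions, I would first note that since the modal $\mu$-calculus has the finite model property, a formula is satisfiable iff it is satisfiable on a finite Kripke Structure, so the satisfiability clause in the definition of $\compl{\cdot}$ is identical for $\mKS$, $\mKS_{fi}$ and $\mKS_{fb}$. The remaining clause is a universal statement over the concrete class, and shrinking that class can only weaken the requirement; hence from $\mKS_{fb} \subseteq \mKS_{fi} \subseteq \mKS$ we get $\compl{\langle \mKS, \mGTS, \mixsim, \models^{SIS}\rangle} \subseteq \compl{\langle \mKS_{fi}, \mGTS, \mixsim, \models^{SIS}\rangle} \subseteq \compl{\langle \mKS_{fb}, \mGTS, \mixsim, \models^{SIS}\rangle}$. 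Throughout I use the model-level reading of completeness already implicit in the proof of Theorem~\ref{thm:compl_gts_infi} (one finite GTS must abstract all of $K$), which is exactly what makes first the number of initial states and then the branching degree relevant.

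For the first strictness I would take $\varphi_1 = \mu X.\,(l \vee \lozenge X)$, expressing reachability of an $l$-state. Since reachability is a genuine least-fixpoint property, $\varphi_1 \notin \semcom{\lmu^{\nu}}$, so by Theorem~\ref{thm:compl_gts_infi} we have $\varphi_1 \notin \compl{\langle \mKS, \mGTS, \mixsim, \models^{SIS}\rangle}$; concretely this is witnessed, exactly as in that proof, by a disjoint union of structures whose $i$-th single-rooted component reaches $l$ only after $i$ steps, so no single finite GTS can carry witnessing must-paths of every length. To show $\varphi_1 \in \compl{\langle \mKS_{fi}, \mGTS, \mixsim, \models^{SIS}\rangle}$, observe that with finitely many initial states, each initial state satisfying $\varphi_1$ has a \emph{finite} shortest path to an $l$-state; I would build, per path, a finite chain of must-transitions ending in an $l$-labelled state, and add one ``universal'' state $u$ (empty label, a single may self-loop, no must-transitions) with may-transitions from each chain state to $u$. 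Mapping path states to chain states and everything else to $u$ is a mixed simulation, the must-chains prove $\varphi_1$ under SIS, and finitely many initial states yield a finite GTS.

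For the second strictness I would take $\varphi_2 = \Box\,\mu X.\,(l \vee \lozenge X)$, i.e.\ \emph{every} successor reaches an $l$-state. The separating structure is a \emph{single} root $s_0$ (hence in $\mKS_{fi}$) with children $c_1, c_2, \dots$, where $c_i$ has a unique $l$-reaching path of length exactly $i$; then $s_0 \models \varphi_2$, but in any abstracting GTS the $\Box$ forces every may-successor of the abstract root to satisfy $\mu X.\,(l \vee \lozenge X)$, so the may-successor simulating $c_i$ must carry a must-witness whose depth is matched by $c_i$'s unique path, namely depth exactly $i$. As no single abstract state can witness two incompatible depths, unboundedly deep witnesses are needed and the abstraction cannot be finite, so $\varphi_2 \notin \compl{\langle \mKS_{fi}, \mGTS, \mixsim, \models^{SIS}\rangle}$. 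Conversely, for a \emph{finitely} branching $K$ with finitely many initial states, each initial state has finitely many successors, each with a finite witnessing path to $l$; the finitely many depths are bounded, and the universal-state-plus-must-chain construction (one chain per successor, the abstract root's only may-successors being these chain-roots so that the $\Box$ is met) yields a finite GTS, giving $\varphi_2 \in \compl{\langle \mKS_{fb}, \mGTS, \mixsim, \models^{SIS}\rangle}$.

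The main obstacle I anticipate is the non-completeness argument for $\varphi_2$: one must argue rigorously that mixed simulation forces the abstract must-witness for $c_i$ to be \emph{matched} by a concrete path of equal length, and hence, using that $c_i$'s only $l$-reaching path has length exactly $i$, that distinct branches require distinct and unboundedly deep abstract witnesses, ruling out that one abstract state serves two incompatible depths. This hinges on a precise analysis of how the $\lozenge$/must-hypertransition clause under SIS interacts with the definition of $\mixsim$. By comparison, checking that the universal-state constructions are genuine mixed simulations and really prove $\varphi_1$ and $\varphi_2$ under SIS is routine bookkeeping, as is the monotonicity argument underlying the non-strict inclusions.
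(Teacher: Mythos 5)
Your proposal is correct and shares the paper's overall skeleton: the non-strict inclusions follow from monotonicity in the concrete class, the first strictness is witnessed by the reachability formula $\mu X.(l \vee \lozenge X)$ (the paper uses the same formula, with membership in $\compl{\langle \mKS_{fi}, \mGTS, \mixsim, \models^{SIS} \rangle}$ shown by collapsing states according to their distance-to-$l$ value rather than by your per-initial-state must-chains plus a universal state --- the two constructions are interchangeable), and the non-membership side is delegated to Theorem~\ref{thm:compl_gts_infi} in both cases. Where you genuinely diverge is the second strictness: the paper takes $\mu X.\,\Box X$ (universal termination) and a single-rooted, infinitely branching fan containing a path of every finite length; under SIS the refutation argument then runs entirely through \emph{may}-transitions (a finite GTS simulating arbitrarily long concrete paths must contain an infinite may-path, so the $\mu$-$\Box$ fixpoint fails), which is slightly more elementary than your $\Box\mu X.(l\vee\lozenge X)$ argument, which must track the depth of must-hypertransition witnesses and relate it, via the third clause of Definition~\ref{def:mixsim}, to the length of concrete $l$-reaching paths. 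Your route does work --- the key lemma is that if $\spair{K}{c}\mixsim\spair{M}{m}$ and $m$ lies in the $d$-th approximant of $\semtrue{\mu X.(l\vee\lozenge X)}{}$ then $c$ reaches an $l$-state within $d$ steps, and these approximants stabilise after at most $|M|$ iterations, bounding $i$ --- though your phrasing about ``no single abstract state witnessing two incompatible depths'' is not quite the right invariant (one abstract state of must-depth $d$ may happily abstract every $c_i$ with $i\le d$; the contradiction comes only from the uniform bound $|M|$). A point in your favour: you explicitly verify that your separating formula \emph{is} in $\compl{\langle \mKS_{fb}, \mGTS, \mixsim, \models^{SIS} \rangle}$, whereas the paper's proof only argues the non-membership half for $\mKS_{fi}$ and leaves the positive half (which for $\mu X.\Box X$ needs König's lemma on finitely branching concrete models) implicit.
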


\begin{proof}
Obviously, smaller classes of concrete structures give rise to 
larger completeness sets in general. That the inclusions are strict, 
can be proved with the following counterexamples.
\begin{itemize}
  \item $\compl{\langle \mKS, \mGTS, \mixsim, \models^{SIS} \rangle} \subset
\compl{\langle \mKS_{fi}, \mGTS, \mixsim, \models^{SIS} \rangle}$:

By Thm.~\ref{thm:compl_gts_infi}, it suffices to show that there is at least
one ``true'' least
fixpoint formula that belongs to $\compl{\langle \mKS_{fi}, \mGTS,
\mixsim, \models^{SIS} \rangle}$. Consider the simple reachability
formula $\varphi = \mu X. P \vee \diamop X$ and suppose that $K
\models^{SIS} \varphi$. Then a state labelled with $P$ is reachable
from every initial state with a finite number of steps. For any state
$s$ of $K$, let $\textsc{steps}(s)$ denote the minimal path length
from $s$ to a $P$-labelled state, and let $n$ be the largest value of
$\textsc{steps}(s^0)$, among all initial states $s^0$. We can group
the states together according to the value of $\textsc{steps}(s)$,
and collapse all states with $\textsc{steps}(s) > n$ into one abstract
state. We complete the construction of the KMTS by adding must transitions
corresponding to decrementing $\textsc{steps}(s)$, and may transitions
whenever necessary. It is not too hard to see that such an KMTS mixed
simulates $K$ and allows to prove $\varphi$.

\item $\compl{\langle \mKS_{fi}, \mGTS, \mixsim, \models^{SIS} \rangle} \subset$ 
$\compl{\langle \mKS_{fb}, \mGTS, \mixsim, \models^{SIS} \rangle}$:

Consider a formula expressing that all computations terminate:
$\varphi = \mu X. \boxop X$. Firstly, observe that there exists an
infinitely branching model with one initial state, on which this
property holds but cannot be proved with a finite GTS. In its initial
state there is a choice between executing $n$ transitions for all $n \in
\nat$. Every execution path is bound to terminate, but since GTSs are
unable to compress the counting of steps, no GTS can finitely approximate
this Kripke Structure in a way that would allow to prove $\varphi$.

 \end{itemize}
\end{proof}

\section{Conclusions and future work}
\label{sec:conclusions}

Abstraction is often a key instrument for turning intractable model checking
problems into tractable problems. In the theoretical studies on abstraction
frameworks, the expressivity and completeness of the frameworks are the main
indicators of the power of the frameworks. 

A major problem is that the notion of expressiveness of an
abstraction framework is not defined unambiguously in the literature.
Wei \emph{et al}~\cite{WGC:11} established that using a certain notion
of expressiveness, the \GTS abstraction formalism is equally expressive
as the \KMTS abstraction formalism. In this paper, we showed that using
another common notion of expressiveness, occurring in the literature
on abstraction, the \GTS abstraction formalisms is actually strictly
more expressive than the \KMTS abstraction formalism. 

The same paper occasionally uses the notions of completeness and
expressivity interchangeably. Since both notions are defined differently,
we set out to investigate their relations. We proved that only under
specific conditions, there is a relation between completeness and
expressivity.

Finally, we studied the problem of completeness in more detail. We give
tighter characterisations of the completeness of the GTS framework.
Among others, we showed that GTSs are complete for exactly the least
fixpoint-free fragment of the $\mu$-calculus under the condition that
concrete models are Kripke Structures with potentially an infinite number
of initial states. We showed that these characterisations change when
imposing different requirements on the concrete models.

Several lines of future research are still open. For one, it would
be interesting to provide conditions under which expressiveness and
completeness are in some way related. A second avenue of research would
be investigating exact characterisations of completeness for subclasses
of Kripke Structures as concrete models. We expect that this is a very
difficult, yet challenging problem to solve.

\paragraph{Acknowledgements} The authors would like to thank the anonymous
reviewers for their detailed feedback which have helped to improve the paper.
This research has been partially funded by the Netherlands Organisation
for Scientific Research (NWO) under grant number 600.065.120 (the
\emph{Verification of Complex Hierarchical Systems} project).

\bibliographystyle{eptcs}
\bibliography{bibliography}

\end{document}